\documentclass[12pt,journal,onecolumn]{IEEEtran}

\usepackage{tikz}
\usepackage{fullpage}
\usepackage{amsfonts}
\usepackage{amsmath,amsbsy,pifont,subfigure,verbatim,amssymb,enumerate}

\usepackage{setspace}
\doublespacing

\usepackage{amsthm}

 \usepackage{microtype}
 \usepackage{vwcol} 
 \usepackage{subfigure} 
\usepackage{amsmath}

\DeclareMathOperator*{\argmin}{argmin}

\usepackage[english]{babel} 

  \newtheorem{theorem}{Theorem}
  
 \newtheorem{remark}{Remark}
 
  \newtheorem{definition}{Definition}
 \newtheorem{lemma}[theorem]{Lemma} 

   \usepackage{subfig}
      \makeatletter
      \renewcommand{\paragraph}{\@startsection{paragraph}{4}{\z@}%
        {-3.25ex\@plus -1ex \@minus -.2ex}%
        {1.5ex \@plus .2ex}%
        {\normalfont\normalsize}}
      \makeatother
      
       \usepackage{cite}
\begin{document}

\title{Revisited Design Criteria For STBCs With Reduced Complexity ML Decoding}

\author{Asma~Mejri, Mohamed-Achraf~Khsiba,~\IEEEmembership{Student Member,~IEEE,}~and
        Ghaya~Rekaya-Ben Othman,~\IEEEmembership{Member,~IEEE}
\thanks{This paper is submitted for publication to IEEE Transactions on Wireless Communications. Part of the material in this paper has been accepted for publication in the proceedings of the Twelfth International Symposium on Wireless Communication Systems (ISWCS 2015), Brussels, Belgium, August 25-28, 2015.  Asma Mejri, Mohamed-Achraf Khsiba and Ghaya Rekaya-Ben Othman are with the Communications and Electronics Department
of Telecom-ParisTech, Paris, 75013, France. Emails: amejri,khsiba,rekaya@telecom-paristech.fr.}}

\maketitle
%\markboth{Submitted paper to IEEE Transactions on Communications}%

\begin{abstract}
The design of linear STBCs offering a low-complexity ML decoding using the well known Sphere Decoder (SD) has been extensively studied in last years. The first considered approach to derive design criteria for the construction of such codes is based on the Hurwitz-Radon (HR) Theory for mutual orthogonality between the weight matrices defining the linear code. This appproach served to construct new families of codes admitting fast sphere decoding such as \textit{multi-group decodable}, \textit{fast decodable}, and \textit{fast-group decodable} codes. In a second Quadratic Form approach, the Fast Sphere Decoding (FSD) complexity of linear STBCs is captured by a Hurwitz Radon Quadratic Form (HRQF) matrix based in its essence on the HR Theory. In this work, we revisit the structure of weight matrices for STBCs to admit Fast Sphere decoding. We first propose novel sufficient conditions and design criteria for reduced-complexity ML decodable linear STBCs considering an arbitrary number of antennas and linear STBCs of an arbitrary coding rate. Then we apply the derived criteria to the three families of codes mentioned above and provide analytical proofs showing that the FSD complexity depends only on the weight matrices and their ordering and not on the channel gains or the number of antennas and explain why the so far used HR theory-based approaches are suboptimal. 
%FSD of Cyclic Division Algebra-based Perfect Codes is studied as an example.
\end{abstract}

\section{Introduction and Preliminaries}
We consider in this work transmission over a coherent block-fading MIMO channel using $n_t$ transmit and $n_r$ receive antennas and coded modulations using length-$T$ linear Space-Time Block Codes. The complex-valued channel output is written as:
 \begin{equation}\label{sysc}
 \mathbf{Y} = \mathbf{H}\mathbf{X} + \mathbf{Z}
 \end{equation}
where $\mathbf{X} \in \mathbb{C}^{n_t \times T}$ is the codeword matrix sent over $T$ channel uses and belonging to a codebook $\mathcal{C}$. $\mathbf{Z} \in \mathbb{C}^{n_r \times T}$ represents a complex-valued AWGN of i.i.d. entries of variance $N_0$ per real-valued dimension. The channel fadings are represented by the matrix $\mathbf{H} \in \mathbb{C}^{n_r \times n_t}$. As coherent transmission is considered, the channel matrix $\mathbf{H}$ is assumed to be perfectly known (estimated) at the receiver. In addition, the fadings $h_{ij}$ are assumed to be complex circularly symmetric Gaussian random variables of zero-mean and unit variance.

As linear STBCs are concerned within this work, the used STBC encodes $\kappa$ complex information symbols represented by the complex-valued symbols vector $\mathbf{s}=\left[ s_1,...,s_{\kappa} \right]^{t}$ and the codeword matrix admits a Linear Dispersion (LD) form according to:
\begin{equation}
\mathbf{X} = \displaystyle\sum_{i=1}^{\kappa} \left( \Re{(s_i)}\mathbf{A}_{2i-1}+ \Im{(s_i)}\mathbf{A}_{2i} \right)
\end{equation}
where $\Re{(s_i)}$ and $\Im{(s_i)}$ correspond respectively to the real and imaginary parts of the $\kappa$ complex information symbols and matrices $\mathbf{A}_l, l=1,...,2\kappa$ are fixed $n_t \times T$ complex linearly independent matrices defining the code, known as LD or weight matrices. The rate of such codes is equal to $\frac{\kappa}{T}$ complex symbols per channel realization. When full rate codes are used, $\kappa=n_tT$. Moreover, we consider in this work $2^{q}-$QAM constellations with $q$ bits per symbol and for which the real and imaginary parts of the information symbols belong to a PAM modulation taking values in the set $\left[ -(q-1),(q-1) \right]$. (Similar results can be derived for hexagonal constellations).

In this work we are interested in the decoding of linear STBCs using Maximum Likelihood criterion. Accordingly, the receiver seeks an estimate $\hat{\mathbf{X}}$ of the transmitted codeword $\mathbf{X}$ by solving the minimization problem given by:
\begin{equation}
\hat{\mathbf{X}} = \argmin_{\mathbf{X} \in \mathcal{C}} \parallel \mathbf{Y}-\mathbf{H}\mathbf{X} \parallel^{2}
\end{equation}
ML decoding remains thus to find the codeword matrix that minimizes the squared norm $m(\mathbf{X})=\parallel \mathbf{Y}-\mathbf{H}\mathbf{X} \parallel^{2}$. The complexity of ML decoding is determined by the minimum number of values of $m(\mathbf{X})$ that needs to be computed to find the ML solution. It is upper bounded by $2^{qn_tT}$, the complexity of the exhaustive search. One way to avoid the high complexity of the exhaustive search consists in applying sequential decoding such as the Sphere Decoder \cite{Viterbo99}. We are interested in this work in linear STBCs that admit low-complexity ML decoding.

Constructions of such codes date back to the \textit{Complex Orthogonal} designs with the Alamouti code \cite{Alamouti98} and subsequent codes proposed in \cite{Tarokh99,Tirkkonen02}. This family of codes offers the least ML decoding complexity that is linear as function of the constellation size. Their main drawback is their low maximum rate. \textit{Quasi-Orthogonal} codes with full diversity and larger rates than the orthogonal designs have been later on proposed \cite{Jafarkhani01,Sharma03,Su04,Yuen05,Wang05,Khan06}. Recently, 3 main families of codes admitting low-complexity ML decoding have been discovered: \textit{multi-group decodable} \cite{Dao08,5719284,Karmakar09,Karmakar09bis,Rajan10}, \textit{fast decodable} \cite{Biglieri08,Srinath09,Paredes08,Sinnokrot10,Oggier10,Vehkalahti10,Luzzi11,Ren11,Vehkalahti12,Markin13}, and \textit{fast group decodable} codes \cite{Ren10}. A sub-class of fast decodable codes, termed \textit{Block-orthogonal codes} has been proposed in \cite{Ren11bis}. Information symbols in codes that belong to these families can be grouped into different partitions and decoded separately resulting in low-decoding complexity.  

The construction and study of the above mentioned families of low-complexity ML decoding codes has been performed based on the so-called \textit{Hurwitz-Radon Theory} (HR) to derive sufficient design criteria and conditions on the mutual orthogonality between the weight matrices defining the linear code. This theory has been later on used, in recent works, to define a second Quadratic Form approach in \cite{Jithamithra10,Jithamithra11,Jithamithra13}. The FSD complexity of linear STBCs is captured, under this approach, by a Hurwitz Radon Quadratic Form (HRQF) matrix. It is shown in \cite{Jithamithra11,Jithamithra13} that the Quadratic Form approach allows to determine the FSD complexity of the codes that belong to the families of multi-group decodable, fast decodable and fast-group decodable. Nevertheless, as highlighted in \cite{Jithamithra13}, it does not capture the sub-class of block-orthogonal codes. In this work, we revisit the design of the weight matrices for STBCs to admit low-complexity ML fast sphere decoding. The contributions of this work are as follows:
\begin{itemize}
\item We propose novel sufficient conditions and design criteria for reduced-complexity ML decodable linear STBCs considering an arbitrary number of antennas and linear STBCs of an arbitrary coding rate.
\item We apply the derived criteria to the families of multi-group decodable, fast decodable and fast group decodable codes and provide analytical proofs showing that the FSD complexity depends only on the weight matrices and their ordering and not on the channel gains or the number of antennas.
\item We provide analytical proofs showing that the HR Theory and HRQF-based approaches are suboptimal and explain why the latter does not allow to capture exactly the FSD complexity of all classes of STBCs and show that our derived design criteria are sufficient to construct Block Orthogonal codes.
%\item We study the FSD complexity of codes built using Cyclic Division Algebra using the derived design criteria.
\end{itemize}  
The remaining of this work is organized as follows: in Section II and III we introduce the system model and review the formal definitions of the main classes of low-complexity ML decoding codes essentially multi-group decodable, fast decodable and fast-group decabable codes. In Section IV, we derive novel sufficient design criteria for FSD of STBCs, apply them to the above mentioned 3 families of codes including the sub-class of block orthogonal codes and show analytically the suboptimality of the sufficient design conditions existing in literature based on HR theory. Results are summarized in a Section V.
%\vspace{-2cm}
\paragraph*{\textbf{Notation:} in this work we use the following notations: boldface letters are used for column vectors and capital boldface letters for matrices. Superscripts $^{t},~^{H}$ and $^{*}$ denote transposition, Hermitian transposition, and complex conjugation, respectively. $\mathbb{Z}$ and $\mathbb{C}$ denote respectively the ring of rational integers and the field of complex numbers. $i$ is the complex number such that $i^{2}=-1$. In addition, $\mathbf{I}_{n}$ denotes the $n \times n$ identity matrix. Furthermore, for a complex number $x$, we define the $(\tilde{.})$ operator from $\mathbb{C}$ to $\mathbb{R}^{2}$ as $\tilde{x}= \left[ \Re{(x)}, \Im{(x)} \right]^{t}$ where $\Re{(.)}$ and $\Im{(.)}$ denote real and imaginary parts. This operator can also be extended to a complex vector $\mathbf{x}=\left[ x_1,...,x_n \right]^{t} \in \mathbb{C}^{n}$ according to: $\tilde{\mathbf{x}} = \left[ \Re{(x_1)}, \Im{(x_1)}, ..., \Re{(x_n)}, \Im{(x_n)} \right]$. We define additionally the operator $\bar{\bar{(.)}}$ from $\mathbb{C}$ to $\mathbb{R}^{2}$ as $\bar{\bar{{x}}}= \left[-\Im{(x)},\Re{(x)} \right]^{t}$ Also, we define the operator $\check{(.)}$ from $\mathbb{C}$ to $\mathbb{R}^{2 \times 2}$ as:
\begin{equation}
\check{x} \stackrel{\triangle}{=} \left[ \begin{array}{cc}
\Re{(x)} & -\Im{(x)} \\ \Im{(x)} & \Re{(x)}
\end{array} \right] \notag
\end{equation}
The operator $\check{(.)}$ can be in a similar way extended to $n \times n$ matrices by applying it to all the entries of the matrix which results in a $2n \times 2n$ real-valued matrix. We define also the $\mathrm{vec}(.)$ operator that stacks the $m$ columns of an $n \times m$ complex-valued matrix into an $mn$ complex column vector. The $\parallel .\parallel$ operator denotes the Euclidean norm of a vector. We define also, for a complex number $x \in \mathbb{C}$ such that $x= \Re{(x)} + i \Im{(x)}$ the trace form such that $\mathrm{Tr}(x)=\mathrm{Tr}_{\mathbb{Q}(i)/\mathbb{Q}}(x)=2\Re{(x)}$.}
 
\section{System Model and real-valued vectorization of channel output}
%\subsection{ML decoding problem}
%ML decoding using sequential decoders in general and the Sphere Decoder in particular exploits the triangular structure of the ML metric. 
Without loss of generality, we consider in the remaining of this work sphere decoding. Results hold for any sequential decoders. In order to implement such decoders to solve the ML decoding problem, the complex-valued system in Eq.(\ref{sysc}) is transformed into a real-valued one using the vectorization operator $\mathrm{vec}(.)$ and the complex-to-real transformations $(\tilde{.})$ and $\check{(.)}$. We obtain accordingly:
\begin{equation}
\tilde{\mathrm{vec}}\left( \mathbf{Y}\right) = \mathbf{H}_{eq} \tilde{\mathbf{s}} + \tilde{\mathrm{vec}}\left( \mathbf{Z} \right)
\end{equation}
where $\mathbf{H}_{eq} \in \mathbb{R}^{2n_rT \times 2\kappa}$ is given by $\mathbf{H}_{eq}=\left( \mathbf{I}_{T} \otimes \check{\mathbf{H}} \right) \mathbf{G}$. The real-valued matrix $\mathbf{G} \in \mathbb{R}^{2n_tT \times 2\kappa}$, termed a generator matrix of the linear code satisfies $\tilde{\mathrm{vec}}(\mathbf{X}) = \mathbf{G} \tilde{\mathbf{s}}$ and can be written as function of the weight matrices as:
\begin{equation}
\mathbf{G} =  \left[ \tilde{\mathrm{vec}}\left( \mathbf{A}_1 \right) | \tilde{\mathrm{vec}}\left( \mathbf{A}_2 \right) | ... | \tilde{\mathrm{vec}}\left( \mathbf{A}_{2\kappa} \right) \right]
\end{equation} 
Given that the ordering of the weight matrices in the LD form corresponds to the order of the information symbols as $\Re{(s_1)},\Im{(s_1)},...,\Re{(s_{\kappa})},\Im{(s_{\kappa})}$ which corresponds to the considered order in the complex-to-real transformation using the operator $\tilde{(.)}$, any change of the ordering of the information symbols results in a similar modification in the ordering of the weight matrices.
%Any changing in the ordering of the symbols in the vector $\tilde{\mathbf{s}}$ changes the order of the weight matrices in the generator matrix $\mathbf{G}$.
 The obtained real system can then be written in the form:
\begin{equation}\label{mlpb}
\mathbf{y} = \mathbf{H}_{eq}\tilde{\mathbf{s}} + \mathbf{z}
\end{equation}
Using this equivalent system, the ML decoding metric is equivalently written by:
\begin{equation}
m(\tilde{\mathbf{s}}) = \parallel \mathbf{y}-\mathbf{H}_{eq}\tilde{\mathbf{s}} \parallel^{2} = \parallel \mathbf{Q}^{t}\mathbf{y} - \mathbf{R}\tilde{\mathbf{s}} \parallel^{2} 
\end{equation}
Where $\mathbf{Q} \in \mathbb{R}^{2n_rT \times 2\kappa}$ is an orthogonal matrix and $\mathbf{R} \in \mathbb{R}^{2\kappa \times 2\kappa}$ is upper triangular obtained both from the QR decomposition of the equivalent channel matrix $\mathbf{H}_{eq}=\mathbf{Q}\mathbf{R}$. Using Gram-Schmidt orthogonolization, matrices $\mathbf{Q}$ and $\mathbf{R}$ are given by: $\mathbf{Q}= \left[ \mathbf{q}_1 | \mathbf{q}_2 |... | \mathbf{q}_{2\kappa} \right]$ where $\mathbf{q}_i, i=1,...,2\kappa$ are column vectors and:
\begin{align}
\mathbf{R} &= \left[ \begin{array}{ccccc}
\parallel \mathbf{r}_1 \parallel & <\mathbf{q}_1,\mathbf{h}^{eq}_2> & <\mathbf{q}_1,\mathbf{h}^{eq}_3> & \cdots & <\mathbf{q}_1,\mathbf{h}^{eq}_{2\kappa}> \\
0 & \parallel \mathbf{r}_2 \parallel & <\mathbf{q}_2,\mathbf{h}^{eq}_3> & \cdots & <\mathbf{q}_2,\mathbf{h}^{eq}_{2\kappa}> \\
0 & 0 & \parallel \mathbf{r}_3 \parallel & \cdots & <\mathbf{q}_3,\mathbf{h}^{eq}_{2\kappa}> \\
\vdots & \vdots & \vdots & \ddots & \cdots \\
0 & 0 & 0 & \cdots & \parallel \mathbf{r}_{2\kappa} \parallel
\end{array} \right]
\end{align}
 where $\mathbf{r}_1=\mathbf{h}^{eq}_1, \mathbf{q}_1=\frac{\mathbf{r}_1}{\parallel \mathbf{r}_1\parallel}$ and for $i=2,...,2\kappa$, $\mathbf{r}_i = \mathbf{h}^{eq}_i - \sum_{j=1}^{i-1} <\mathbf{q}_j,\mathbf{h}^{eq}_i> \mathbf{q}_j ~,~ \mathbf{q}_{i}=\frac{\mathbf{r}_i}{\parallel \mathbf{r}_i \parallel}$.
\begin{remark}
The provided results in this work are based on the complex-to-real tranformations $\tilde{(.)}$ and $\check{(.)}$ and the column-wize vectorization operation and show that the zero structure of the matrix $\mathbf{R}$ depends only on the weight matrices and their ordering in the matrix $\mathbf{G}$ which corresponds to the ordering of the real and imaginary parts of the complex information symbols in the vector $\tilde{\mathbf{s}}$. Any changing in this ordering impacts the number and locations of zero entries in the matrix $\mathbf{R}$. This can be observed as follows. Let $\mathbf{P}_s$ be a permutation matrix that allows to change the ordering of the elements in $\tilde{\mathbf{s}}$ applied to $\tilde{\mathbf{s}}$ and similarly $\mathbf{P}_y$ a permutation matrix applied to $\tilde{\mathrm{vec}}\left( \mathbf{Y}\right)$ and $\tilde{\mathrm{vec}}\left( \mathbf{Z}\right)$. Given that the permutation matrix is orthogonal, we obtain the equivalent system given by:
\vspace{-0.5cm}
\begin{align}
\mathbf{P}_y\tilde{\mathrm{vec}}\left( \mathbf{Y}\right) &=\mathbf{P}_y\mathbf{H}_{eq} \tilde{\mathbf{s}} + \mathbf{P}_y\tilde{\mathrm{vec}}\left( \mathbf{Z} \right) = \mathbf{P}_y\left( \mathbf{I}_{T} \otimes \check{\mathbf{H}} \right)\mathbf{G} \tilde{\mathbf{s}} + \mathbf{P}_y\tilde{\mathrm{vec}}\left( \mathbf{Z} \right) \notag \\
&=  \mathbf{P}_y\left( \mathbf{I}_{T} \otimes \check{\mathbf{H}} \right)\mathbf{G} \mathbf{P}_s^{t} \mathbf{P}_s \tilde{\mathbf{s}} + \mathbf{P}_y\tilde{\mathrm{vec}}\left( \mathbf{Z} \right)= \mathbf{H}_{eq,ord} \tilde{\mathbf{s}}_{ord} + \mathbf{P}_y\tilde{\mathrm{vec}}\left( \mathbf{Z} \right) \notag
\end{align}
where $\mathbf{H}_{eq,ord}=\mathbf{P}_y\mathbf{H}_{eq}\mathbf{P}_s^{t}$ and $\tilde{\mathbf{s}}_{ord}=\mathbf{P}_s\tilde{\mathbf{s}}$. Then, let $\mathbf{H}_{eq}=\mathbf{Q}\mathbf{R}$ and $\mathbf{H}_{eq,ord}=\mathbf{Q}_1\mathbf{R}_1$ the QR decompositions of the equqivalent channel matrix before and after permutation. We can write:
\begin{align}
\mathbf{H}_{eq,ord}=\mathbf{Q}_1\mathbf{R}_1 & \Leftrightarrow \mathbf{P}_y\mathbf{H}_{eq}\mathbf{P}_s^{t}=\mathbf{Q}_1\mathbf{R}_1 \Leftrightarrow \mathbf{H}_{eq}=\mathbf{P}_y^{t}\mathbf{Q}_1\mathbf{R}_1\mathbf{P}_s \Leftrightarrow \mathbf{Q}\mathbf{R}=\mathbf{P}_y^{t}\mathbf{Q}_1\mathbf{R}_1\mathbf{P}_s \notag
%&\Leftrightarrow \mathbf{Q}_1\mathbf{R}_1=\mathbf{P}\mathbf{Q}\mathbf{P}^{t}\mathbf{P}\mathbf{R}\mathbf{P}^{t} \\
%&\Leftrightarrow \mathbf{Q}_1=\mathbf{P}\mathbf{Q}\mathbf{P}^{t}, \mathbf{R}_1=\mathbf{P}\mathbf{R}\mathbf{P}^{t}
\end{align}
The relation between the matrix $\mathbf{R}$ of the QR decomposition of the equivalent channel matrix before and after applying the permutation shows that $\mathbf{P}_y$ does not impact the zero structure of the matrix $\mathbf{R}$ while any permutation of the symbols in the vector $\tilde{\mathbf{s}}$ and equivalently the columns of the matrix $\mathbf{G}$ has an impact on the locations and the numbers of the zero entries in the matrix $\mathbf{R}$. This observation sheds light on the investigation of the optimal permutation matrix $\mathbf{P}_s$ that results in a matrix $\mathbf{R}$ enabling ML decoding with the least complexity.
\end{remark}
When sequential decoding, for example using the Sphere Decoder, is used to solve this minimization problem, its complexity can be alleviated thanks to zero entries in the matrix $\mathbf{R}$, that depend on the used code and the ordering of the real and imaginary parts of the symbols in the vector $\tilde{\mathbf{s}}$ as discussed above.
 
\section{FSD Complexity and Main classes of Low-Complexity ML decoding codes}
In literature, we distinguish 3 main classifications of codes resulting in matrix $\mathbf{R}$ having an interesting form enabling reduced-complexity ML decoding: \textit{Multi-group decodable codes}, \textit{Fast decodable codes} and \textit{Fast group decodable codes}.  By structure, its meant the locations of the zero entries $R_{ij}$. The construction of such codes and determination of the structure of the matrix $\mathbf{R}$ for these classes is, commonly in literature, determined using a \textit{mutual orthogonality} property of the weight matrices based on which two main approaches are proposed in literature: \textit{Hurwitz-Radon} theory (HR) approach and a \textit{Quadratic Form} (HRQF) approach. We detail in the following subsections these two approaches and summarize the main results. We provide for convenience the definition of a partition as follows.
\begin{definition}[A set partition]
We call a partition $\left\lbrace a_1,...,a_n \right\rbrace$ into $g$ non-empty subsets $\Gamma_1,...,\Gamma_g$ with cardinalities $K_1,...,K_g$ an ordered partition if $\left\lbrace a_1,...,a_{K_1} \right\rbrace \in \Gamma_1$, $\left\lbrace a_{K_1+1},...,a_{K_1+K_2} \right\rbrace \in \Gamma_2$, so on till $\left\lbrace a_{\sum_{i=1}^{g-1}K_{i}+1},...,a_{\sum_{i=1}^{g}K_i} \right\rbrace \in \Gamma_g$.
\end{definition}

\subsection{Hurwitz-Radon theory-based approach}
The HR theory-based approach uses in its essence the mutual orthogonality of weight matrices \cite{Radon22}. The main result is stated in the following theorem \cite{Srinath09}.
\begin{theorem}\label{HRT}
For an STBC with $\kappa$ independent complex information symbols and $2\kappa$ linearly independent matrices $\mathbf{A}_l,l=1,...,2\kappa$, if, for any $i$ and $j$, $i \neq j$, $1 \leq i \neq j \leq 2\kappa$, $\mathbf{A}_i\mathbf{A}_j^{H}+\mathbf{A}_j\mathbf{A}_i^{H}=\mathbf{0}_{n_t}$, then the $i^{th}$ and $j^{th}$ columns of the equivalent channel matrix $\mathbf{H}_{eq}$ are orthogonal.
\end{theorem}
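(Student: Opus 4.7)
My plan is to compute the inner product of the $i$-th and $j$-th columns of $\mathbf{H}_{eq}$ directly and show that it vanishes under the stated Hurwitz-Radon condition. The key bridge between the real-vectorized channel columns and the complex weight matrices is the identity
\begin{equation}
\tilde{\mathrm{vec}}(\mathbf{M})^{t}\,\tilde{\mathrm{vec}}(\mathbf{N}) \;=\; \Re\bigl(\mathrm{Tr}(\mathbf{M}^{H}\mathbf{N})\bigr)\notag
\end{equation}
for any two complex matrices $\mathbf{M},\mathbf{N}$ of the same size, which follows immediately from writing the Frobenius inner product on the complex side and taking its real part.

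First I would argue that the $l$-th column of $\mathbf{H}_{eq}=(\mathbf{I}_{T}\otimes \check{\mathbf{H}})\mathbf{G}$ equals $\tilde{\mathrm{vec}}(\mathbf{H}\mathbf{A}_{l})$. This is because $\check{\mathbf{H}}$ is precisely the real representation of left-multiplication by $\mathbf{H}$, so $\check{\mathbf{H}}\,\tilde{\mathbf{a}}=\widetilde{\mathbf{H}\mathbf{a}}$ for any complex vector $\mathbf{a}$; the Kronecker factor $\mathbf{I}_{T}$ applies this column-by-column to $\mathbf{A}_{l}$. Combined with the vectorization, this gives $\mathbf{h}^{eq}_{l}=\tilde{\mathrm{vec}}(\mathbf{H}\mathbf{A}_{l})$.

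Then I would apply the identity above to obtain
\begin{equation}
\langle \mathbf{h}^{eq}_{i},\mathbf{h}^{eq}_{j}\rangle \;=\; \Re\bigl(\mathrm{Tr}(\mathbf{A}_{i}^{H}\mathbf{H}^{H}\mathbf{H}\mathbf{A}_{j})\bigr).\notag
\end{equation}
The trick is to symmetrize this real part by adding its conjugate. Since $(\mathbf{A}_{i}^{H}\mathbf{H}^{H}\mathbf{H}\mathbf{A}_{j})^{H}=\mathbf{A}_{j}^{H}\mathbf{H}^{H}\mathbf{H}\mathbf{A}_{i}$ and $\mathrm{Tr}(\mathbf{M}^{H})=\overline{\mathrm{Tr}(\mathbf{M})}$, one has $2\Re(\mathrm{Tr}(\mathbf{A}_{i}^{H}\mathbf{H}^{H}\mathbf{H}\mathbf{A}_{j}))=\mathrm{Tr}(\mathbf{A}_{i}^{H}\mathbf{H}^{H}\mathbf{H}\mathbf{A}_{j})+\mathrm{Tr}(\mathbf{A}_{j}^{H}\mathbf{H}^{H}\mathbf{H}\mathbf{A}_{i})$. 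Using the cyclic property of the trace on each summand, this rewrites as $\mathrm{Tr}\bigl(\mathbf{H}^{H}\mathbf{H}(\mathbf{A}_{j}\mathbf{A}_{i}^{H}+\mathbf{A}_{i}\mathbf{A}_{j}^{H})\bigr)$, which vanishes by the hypothesis $\mathbf{A}_{i}\mathbf{A}_{j}^{H}+\mathbf{A}_{j}\mathbf{A}_{i}^{H}=\mathbf{0}_{n_{t}}$. Hence $\langle \mathbf{h}^{eq}_{i},\mathbf{h}^{eq}_{j}\rangle=0$, concluding the proof.

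I do not expect a genuine obstacle here: the argument is essentially bookkeeping between the real and complex representations. The only step that requires a little care is justifying the identification $(\mathbf{I}_{T}\otimes\check{\mathbf{H}})\tilde{\mathrm{vec}}(\mathbf{A}_{l})=\tilde{\mathrm{vec}}(\mathbf{H}\mathbf{A}_{l})$, which relies on the column-wise convention of $\mathrm{vec}(\cdot)$ and the definition of $\check{(\cdot)}$ as an $\mathbb{R}$-algebra embedding of complex matrices; this observation is precisely what makes the complex $\mathbf{A}_{i}\mathbf{A}_{j}^{H}$ condition control the real inner product, and it is also the reason the remark following the theorem can legitimately discuss the ordering of columns of $\mathbf{G}$ independently from $\mathbf{H}$.
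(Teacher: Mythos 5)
Your proof is correct. Note that the paper does not actually supply its own proof of this theorem -- it is quoted from the literature -- so the relevant comparison is with the machinery the paper builds for its refinement (Theorem~2 and Lemma~2): there the inner product is expanded entrywise as $\langle\mathbf{h}_i^{eq},\mathbf{h}_j^{eq}\rangle=\sum_{l=1}^{T}\tilde{\mathbf{a}}_{il}^{t}\mathbf{M}\tilde{\mathbf{a}}_{jl}$ with $\mathbf{M}=\check{\mathbf{H}}^{t}\check{\mathbf{H}}$, and the symmetry properties of $\mathbf{M}$ from the appendix are used to reduce everything to sums of real trace forms of products of individual matrix entries. Your route is coordinate-free: the intertwining identity $\check{\mathbf{H}}\tilde{\mathbf{a}}=\widetilde{\mathbf{H}\mathbf{a}}$ gives $\mathbf{h}_l^{eq}=\tilde{\mathrm{vec}}(\mathbf{H}\mathbf{A}_l)$, the real inner product becomes $\Re\bigl(\mathrm{Tr}(\mathbf{A}_i^{H}\mathbf{H}^{H}\mathbf{H}\mathbf{A}_j)\bigr)$, and conjugate-symmetrization plus cyclicity of the trace collapse it to $\tfrac{1}{2}\mathrm{Tr}\bigl(\mathbf{H}^{H}\mathbf{H}(\mathbf{A}_i\mathbf{A}_j^{H}+\mathbf{A}_j\mathbf{A}_i^{H})\bigr)=0$; every step checks out. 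What your version buys is brevity and a transparent explanation of why the Hurwitz--Radon condition is merely sufficient: the quantity that must vanish is the trace of the anticommutator weighted by the Hermitian matrix $\mathbf{H}^{H}\mathbf{H}$, which can be zero for all channel realizations without the anticommutator itself vanishing. What the paper's entrywise version buys is precisely that finer granularity -- by keeping the individual terms $M_{2q-1,2p-1}$ and $M_{2q-1,2p}$ separate, it extracts the weaker component-wise conditions of its Theorem~2, which is what it needs to capture the block-orthogonal codes that the HR/HRQF criteria miss. The two arguments are complementary rather than redundant, and yours is the cleaner way to establish the cited theorem as stated.
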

This property has been used to define and construct particular classes of codes defined below.
\begin{definition}[\emph{Multi-group decodable codes}]
An STBC is said to be $g-$group decodable if there exists a parition of $\left\lbrace 1,2,...,2\kappa \right\rbrace$ into $g$ non empty subsets $\Gamma_1, \Gamma_2,...,\Gamma_g$ such that $\mathbf{A}_l\mathbf{A}_m^{H}+\mathbf{A}_m\mathbf{A}_l^{H}=\mathbf{0}$, whenever $l \in \Gamma_i$ and $m \in \Gamma_j$ and $i \neq j$. The corresponding $\mathbf{R}$ matrix has the following form:
\begin{equation}
\mathbf{R} = \left[ \begin{array}{cccc}
\Delta_1 & \mathbf{0} & \cdots & \mathbf{0} \\
\mathbf{0} & \Delta_2 & \cdots & \mathbf{0} \\
\vdots & \vdots & \ddots & \vdots \\
\mathbf{0} & \mathbf{0} & \cdots & \Delta_g
\end{array} \right]
\end{equation}
where $\Delta_i, i=1,...,g$ is a square upper triangular matrix.
\end{definition}
\vspace{-0.5cm}
\paragraph*{\textbf{Example 1}: an example of multi-group decodable codes is the \textbf{ABBA code} proposed in \cite{Tirkkonen00} initially for $2 \times 2$ MIMO systems and afterwards generalized to configurations with more than $3$ transmit antennas. In the case of $n_t=n_r=2$, the ABBA code encodes $2$ complex information symbols or equivalently $4$ real-valued symbols $x_i \in \mathbb{R}, i=1,...,4$. The codeword matrix is accordingly written in the form:
\vspace{-0.5cm}
\begin{equation}\label{mat_abba}
\mathbf{X}_{ABBA} = \left[ \begin{array}{cc}
x_1 + i x_4 & -x_2+ix_3 \\
-x_2+ix_3 & x_1+ix_4
\end{array} \right]
\end{equation}
The corresponding matrix $\mathbf{R}$ matrix with the ordering of the real-valued symbols according to $\left[x_1,x_2,x_3,x_4 \right]$ is given by figure.\ref{abbafig}. 
\vspace{-0.5cm}
\begin{figure}[htp]
  \centering
\includegraphics[height=3cm,width=3cm]{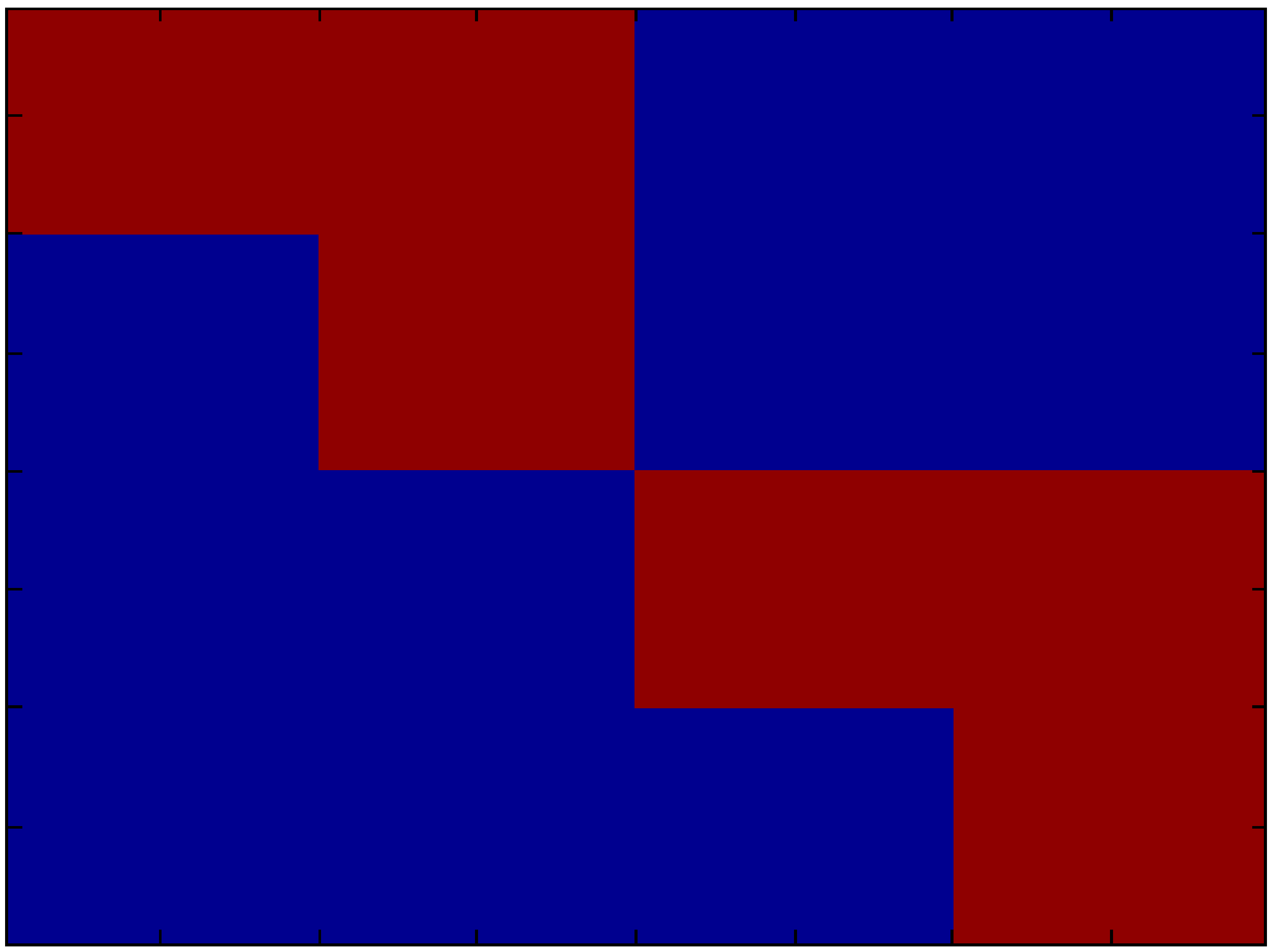}
  \caption{$\mathbf{R}$ matrix for the $2 \times 2$ ABBA code. Blue entries are zeros and red are arbitrary values.}\label{abbafig}
  \end{figure}}

\begin{definition}[\emph{Fast decodable codes}]
An STBC is said to be fast Sphere Decodable code if there exists a partition of $\left\lbrace 1,2,...,L \right\rbrace$ where $L \leq 2\kappa$ into $g$ non empty subsets $\Gamma_1, \Gamma_2,...,\Gamma_g$ such that $<\mathbf{q}_i,\mathbf{h}^{eq}_j>=0, (i < j)$, whenever $i \in \Gamma_p$ and $j \in \Gamma_q$ and $p \neq q$, where $\mathbf{q}_i$ and $\mathbf{h}^{eq}_j$ are column vectors of respectively $\mathbf{Q}$ and $\mathbf{H}_{eq}$. The corresponding $\mathbf{R}$ matrix has the following form:
\begin{equation}
\mathbf{R} = \left[ \begin{array}{cc}
\Delta & \mathbf{B}_1 \\
\mathbf{0} & \mathbf{B}_2 
\end{array} \right]
\end{equation}
where $\Delta$ is an $L \times L$ block diagonal, upper triangular matrix, $\mathbf{B}_1$ is a rectangular matrix and $\mathbf{B}_2$ is a square upper triangular matrix.
\end{definition}

\paragraph*{\textbf{Example 2}: an example of fast decodable codes is the famous \textbf{Silver Code} used in a $2 \times 2$ MIMO system and encoding $4$ complex symbols $s_1,...,s_4$. The codeword matrix is given by:
\vspace{-0.5cm}
\begin{equation}
\mathbf{X}_{sc} = \mathbf{X}_a\left(s_1,s_2\right)+\mathbf{X}_b\left(z_1,z_2\right)
\end{equation}
where $\mathbf{T}=\left[\begin{array}{cc}
1 & 0 \\ 0 & -1
\end{array} \right]$ is unimodular and $\mathbf{X}_a$ and $\mathbf{X}_b$ take the Alamouti structure as:
\vspace{-0.5cm}
\begin{equation}
\mathbf{X}_a \left(s_1,s_2\right) = \left[ \begin{array}{cc}
s_1 & -s^{*}_2 \\
s_2 & s^{*}_1
\end{array} \right] ~,~\mathbf{X}_b \left(z_1,z_2\right) = \left[ \begin{array}{cc}
z_1 & -z^{*}_2 \\
z_2 & z^{*}_1
\end{array} \right] \notag
\end{equation}
and
\vspace{-0.5cm}
\begin{equation}
\left[z_1,z_2\right]^{t} = \mathbf{U}\left[s_3,s_4 \right]^{t} ~,~ \mathbf{U}=\left[\begin{array}{cc}
\frac{1}{\sqrt{7}}(1+i) & \frac{1}{\sqrt{7}}(-1+2i) \\ \frac{1}{\sqrt{7}}(1+2i) & \frac{1}{\sqrt{7}}(1-i)
\end{array} \right] \notag
\end{equation}
The cordeword matrix is written as function of the complex information symbols as:
\begin{equation}
\mathbf{X}_{sc} = \left[ \begin{array}{cc}
s_1 + \frac{1}{\sqrt{7}}\left[(1+i)s_3+(-1+2i)s_4\right] & -s^{*}_2-\frac{1}{\sqrt{7}}\left[(1-2i)s^{*}_3+(1+i)s^{*}_4\right] \\
-s_2-\frac{1}{\sqrt{7}}\left[(1+2i)s_3+(1-i)s_4 \right] & s^{*}_1-\frac{1}{\sqrt{7}}\left[(1-i)s^{*}_3+(-1-2i)s^{*}_4\right]
\end{array} \right]
\end{equation}
The $\mathbf{R}$ matrix obtained when the symbols are ordered as $\left[ \Re{(s_1)},\Im{(s_1)},\Re{(s_2)},\Im{(s_2)},\Re{(s_3)},\Im{(s_3)},\Re{(s_4)},\Im{{s_4}} \right]$ has the form depicted in figure \ref{sc}.
\begin{figure}[htp]
  \centering
\includegraphics[height=3cm,width=3cm]{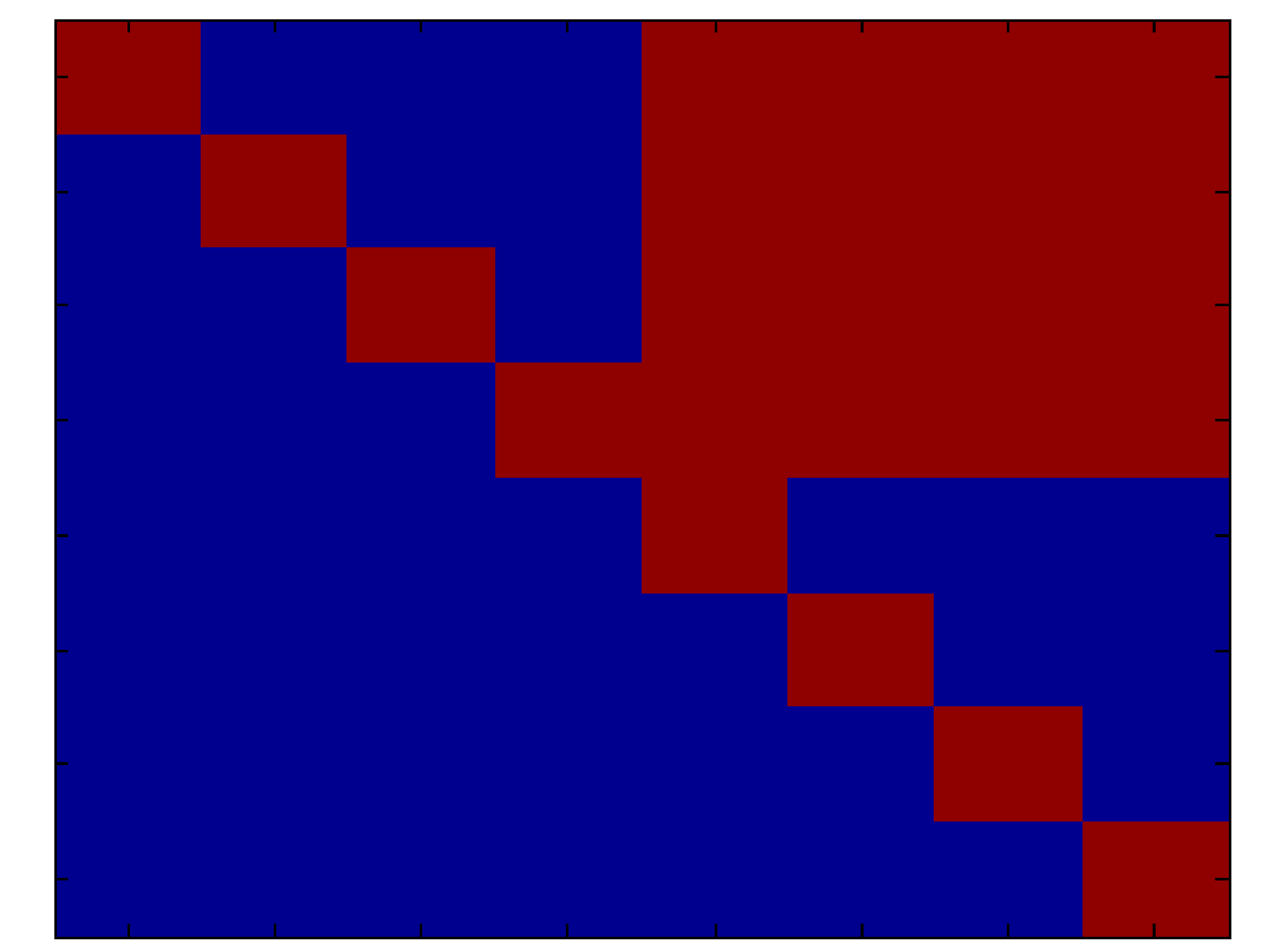}
  \caption{$\mathbf{R}$ matrix for the Silver code. Blue entries are zeros and red are arbitrary values.}\label{sc}
  \end{figure}
}

\begin{definition}[\emph{Fast-group decodable codes}]
An STBC with weight matrices $\mathbf{A}_l, l=1,...,2\kappa$ is said to be fast group decodable if it satisfies the following conditions:
\begin{itemize}
\item There exists a partition of $\left\lbrace 1,...,2\kappa \right\rbrace$ into $g$ non empty subsets $\Gamma_1,\Gamma_2,...,\Gamma_g$ such that $\mathbf{A}_l\mathbf{A}_m^{H}+\mathbf{A}_m\mathbf{A}_l^{H}=0$ whenever $l \in \Gamma_i$ and $m \in \Gamma_j$ and $i \neq j$.
\item In any of the partition $\Gamma_i$, we have $<\mathbf{q}_{i_{l_1}},\mathbf{h}^{eq}_{i_{l_2}}>=0$ $(l_1=1,2,...,L_i-1 ~\mathrm{and}~ l_2=l_1+1,...,L_i)$ and $L_i \leq |\Gamma_i|$ where $i=1,2,...,g$.
\end{itemize}
The corresponding $\mathbf{R}$ matrix has the following form:
\begin{equation}
\mathbf{R} = \left[ \begin{array}{cccc}
\mathbf{R}_1 & \mathbf{0} & \cdots & \mathbf{0} \\
\mathbf{0} & \mathbf{R}_2 & \cdots & \mathbf{0} \\
\vdots & \vdots & \ddots & \vdots \\
\mathbf{0} & \mathbf{0} & \cdots & \mathbf{R}_g
\end{array} \right]
\end{equation}
where at least one $\mathbf{R}_i, i=1,...,g$ has the fast-decodability form as: $\mathbf{R}_i = \left[ \begin{array}{cc}
\Delta_i & \mathbf{B}_{i_1} \\
\mathbf{0} & \mathbf{B}_{i_2} 
\end{array} \right]$, with $\Delta_i$ is an $L_i \times L_i$ block diagonal, upper triangular matrix, $\mathbf{B}_{i_1}$ is a rectangular matrix and $\mathbf{B}_{i_2}$ is a square upper triangular matrix.
\end{definition}
\vspace{-0.5cm}
\paragraph*{\textbf{Example 3}: an example of fast-group decodable codes is the code of proposed in \cite{Ren10} for a $4 \times 4$ system based on an orthogonal STBC with code rate $\frac{3}{4}$ symbols/c.u.\\
} 
\vspace{-0.5cm}
In addition to these families of codes, recently, a particular sub-class of fast decodable codes has been proposed, termed \textit{Block-Orthogonal codes}. Several known codes in literature belong to this family of codes, such as the BHV \cite{Biglieri08} code, the Silver code \cite{Hollanti08}, Srinath-Rajan code \cite{Srinath09}, codes from Cyclic Division algebras \cite{Sethuraman03}, crossed product algebras \cite{Shashidhar06} and fast decodable asymmetric STBCs from division algebras \cite{Vehkalahti12}. Block Orthogonal codes are fast decodable codes that depict additional structural conditions on the $\mathbf{R}$ matrix that has the form:
\begin{equation}
\mathbf{R} = \left[ \begin{array}{cccc}
\mathbf{R}_1 & \mathbf{B}_{12} & \cdots & \mathbf{B}_{1\Gamma} \\
\mathbf{0} & \mathbf{R}_2 & \cdots & \mathbf{B}_{2\Gamma} \\
\vdots & \vdots & \ddots & \vdots \\
\mathbf{0} & \mathbf{0} & \cdots & \mathbf{R}_{\Gamma}
\end{array} \right]
\end{equation}
where each matrix $\mathbf{R}_i,i=1,...,\Gamma$ is full rank, block diagonal, upper triangular with $k$ blocks $\mathbf{U}_{i1},...,\mathbf{U}_{ik}$ each of size $\gamma \times \gamma$ and $\mathbf{B}_{ij}, i=1,...,\Gamma, j=i+1,...,\Gamma$ are non-zero matrices. Codes satisfying such properties are called $(\Gamma,k,\gamma)$ Block Orthogonal STBCs. The formal definition of the sufficient design criteria for Block-Orthogonal codes were first given in \cite{Ren11bis} for codes with parameters $(\Gamma,k,1)$ and recently generalized in \cite{Jithamithra13bis,Jithamithra14} for codes with parameters $(\Gamma,k,\gamma)$ for arbitrary sizes of sub-blocks considering the matrices $\mathbf{R}_i,i=1,...,\Gamma$ having the same size of sub-blocks. These sufficient design conditions are summarized in the following lemma \cite{Jithamithra13bis,Jithamithra14}.
\begin{definition}[\emph{Block Orthogonal codes}]\label{bo}
Let the $\mathbf{R}$ matrix of an STBC with weight matrices $\left\lbrace \mathbf{A}_1,...,\mathbf{A}_L \right\rbrace$ and $\left\lbrace \mathbf{B}_1,...,\mathbf{B}_l \right\rbrace$ be $\mathbf{R}=\left[\begin{array}{cc}
\mathbf{R}_1 & \mathbf{E} \\ \mathbf{0} & \mathbf{R}_2
\end{array} \right]$, where $\mathbf{R}_1$ is an $L \times L$ upper triangular block-orthogonal matrix with parameters $\left(\Gamma-1,k,\gamma \right)$, $\mathbf{E}$ is an $L \times l$ matrix and $\mathbf{R}_2$ is an $l \times l$ upper triangular matrix. The STBC will be block orthogonal with parameters $(\Gamma,k,\gamma)$ if the following conditions are satisfied:
\begin{itemize}
\item The matrices $\left\lbrace \mathbf{B}_1,...,\mathbf{B}_l \right\rbrace$ are $k-$group decodable with $\gamma$ variables in each group.
\item The matrices $\left\lbrace \mathbf{A}_1,...,\mathbf{A}_L \right\rbrace$ when used as weight matrices for an STBC yield an $\mathbf{R}$ having a block orthogonal structure with parameters $(\Gamma-1,k,\gamma)$. When $\Gamma=2$, then $L=l$ and the matrices $\left\lbrace \mathbf{A}_1,...,\mathbf{A}_L \right\rbrace$ are $k-$group decodable with variables $\gamma$ in each group.
\item The set of matrices  $\left\lbrace \mathbf{A}_1,...,\mathbf{A}_L, \mathbf{B}_1,...,\mathbf{B}_l \right\rbrace$ are such that the matrix $\mathbf{R}$ obtained is of full rank.
\item The matrix $\mathbf{E}^{t}\mathbf{E}$ is a block diagonal matrix with $k$ blocks of size $\gamma \times \gamma$.
\end{itemize}
\end{definition}

\paragraph*{\textbf{Example 4}: an example of Block-Orthogonal codes is the \textbf{Golden code}. It employs 2 transmit and 2 receive antennas and encodes $4$ complex QAM symbols over two time slots ($T=2$) achieving full rate and full diversity \cite{Belfiore05}. The Golden code codeword matrix has then the form:
\begin{align}
\mathbf{X} &= \frac{1}{\sqrt{5}} \left[ \begin{array}{cc}
\alpha (s_1+\theta s_2) & \alpha (s_3+\theta s_4) \\
i \bar{\alpha}(s_3+\bar{\theta}s_4) & \bar{\alpha}(s_1+\bar{\theta}s_2)
\end{array} \right]
\end{align}
where 
\begin{align}
\theta &= \frac{1+\sqrt{5}}{2},~\bar{\theta}=\frac{1-\sqrt{5}}{2},~\alpha=1+i-i\theta~,~\bar{\alpha} = 1 + i - i \bar{\theta}
\end{align}
and $s_i,i=1,...,4$, are the transmitted symbols taken from the $2^{q}-$QAM constellation. From an algebraic construction point of view, the Golden code codebook is built using the base field $F=\mathbb{Q}(i)$ and $K=\mathbb{Q}(i,\theta)$ as an extension of $\mathbb{Q}(i)$ of degree $2$. The generator of the Galois group of $K/\mathbb{Q}(i)$ is $\sigma$ such that $\sigma(\theta)=\bar{\theta}$. The integral basis $\mathit{B}_I=\left(\nu_1,\nu_2 \right)=\left(\alpha,\alpha \theta \right)$ and the generator matrix is given by: $\mathbf{M} = \left[ \begin{array}{cc}
\alpha & \alpha\theta \\
\sigma(\alpha) & \sigma(\alpha \theta)
\end{array} \right] = \left[ \begin{array}{cc}
\alpha & \alpha\theta \\
\bar{\alpha} & \bar{\alpha} \bar{\theta}
\end{array} \right]$. The Golden code is Block Orthogonal and its corresponding $\mathbf{R}$ matrix for an order of the real-valued symbols as $\left[ \Re{(s_1)},\Re{(s_2)},\Im{(s_1)},\Im{(s_2)},\Re{(s_3)},\Re{(s_4)},\Im{(s_3)},\Im{(s_4)} \right]$ is given by figure \ref{goldencmat}.
\begin{figure}[htp]
  \centering
\includegraphics[height=3cm,width=3cm]{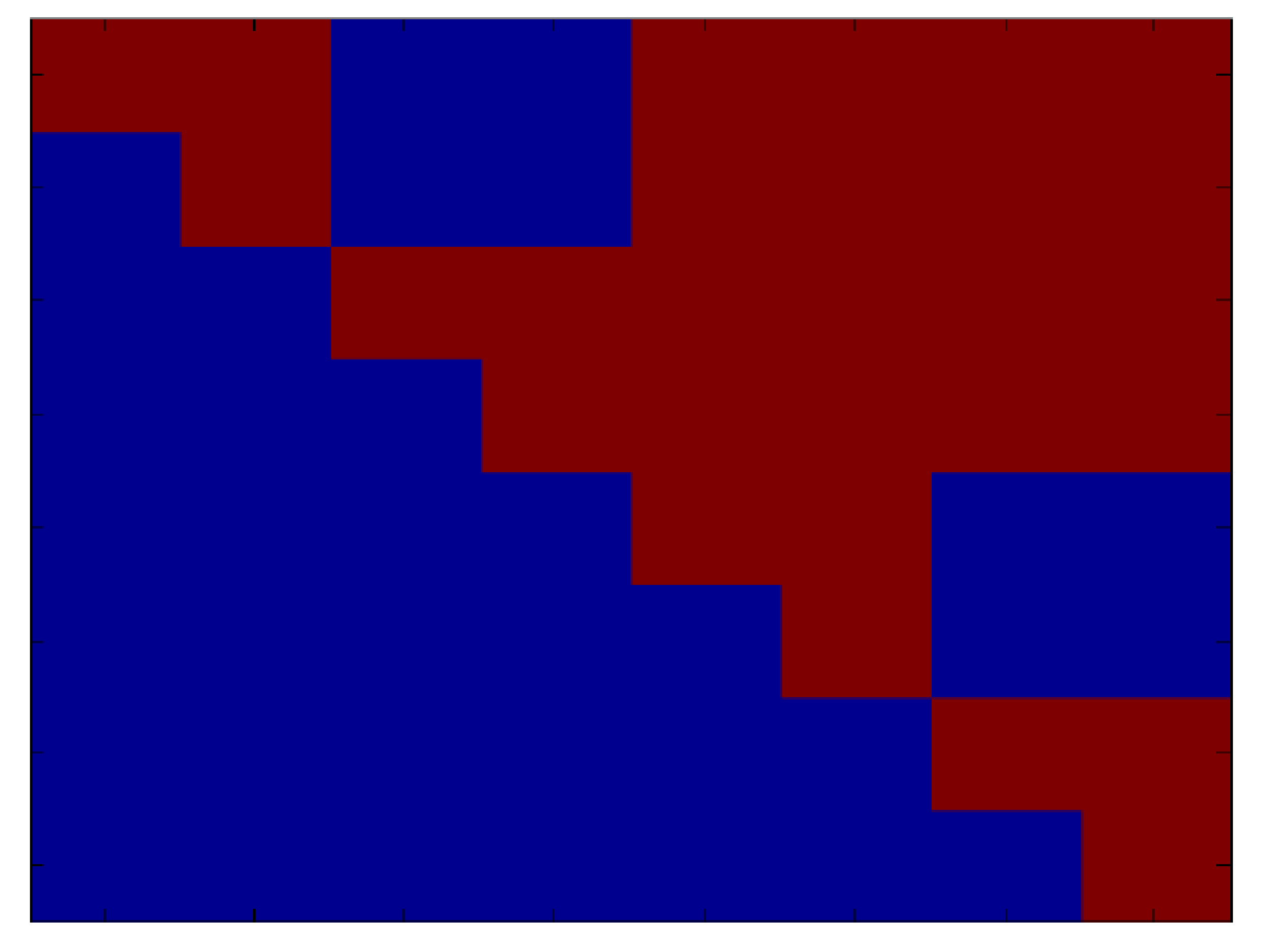}
  \caption{$\mathbf{R}$ matrix for the Golden code.}\label{goldencmat}
  \end{figure}}
\vspace{-0.5cm}
\subsection{Quadratic Form-based approach}
In theorem \ref{HRT}, it is shown that the Hurwitz-Radon Theory capturing the orthogonality between two weight matrices is sufficient to obtain orthogonality of corresponding columns of the equivalent channel matrix. This property was later on used in \cite{Jithamithra10,Jithamithra11,Jithamithra13} to develop a Quadratic Form termed \textit{Hurwitz Radon Quadratic Form} (HRQF). This quadratic form has been considered before in literature \cite{Unger11} to determine whether Quaternion algebras or Biquaternion algebras are division algebras. In \cite{Jithamithra10,Jithamithra11,Jithamithra13}, this quadratic form is further exploited to define the zero structure of the matrix $\mathbf{R}$ by associating to the HRQF a matrix $\mathbf{U}$ such that $U_{ij}= \parallel \mathbf{A}_i \mathbf{A}_j^{H}+\mathbf{A}_j\mathbf{A}_i^{H} \parallel^{2}$ and $U_{ij}=0$ if and only if $\mathbf{A}_i \mathbf{A}_j^{H}+\mathbf{A}_j\mathbf{A}_i^{H}=\mathbf{0}_{n_t}$. This form has been used in \cite{Jithamithra10,Jithamithra11,Jithamithra13} to determine sufficient conditions for an STBC to admit multi-group, fast and fast-group decodability as summarized in the following lemmas \cite{Jithamithra13}. 

\begin{lemma}[\emph{Multi-group decodability}]
Let an STBC with $\kappa$ independent complex symbols, $2\kappa$ weight matrices and HRQF matrix $\mathbf{U}$. If there exists an ordered partition of $\left\lbrace 1,2,...,2\kappa \right\rbrace$ into $g$ non empty subsets $\Gamma_1,...,\Gamma_g$ such that $U_{ij}=0$ whenever $i \in \Gamma_p$ and $j \in \Gamma_q$ and $p \neq q$, then the code is $g-$group sphere decodable.
\end{lemma}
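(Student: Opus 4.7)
The plan is to combine Theorem \ref{HRT} with the Gram--Schmidt recursion that defines $\mathbf{R}$ so as to force all cross-block entries to vanish. Since $U_{ij}=0$ is by definition equivalent to $\mathbf{A}_i\mathbf{A}_j^{H}+\mathbf{A}_j\mathbf{A}_i^{H}=\mathbf{0}_{n_t}$, Theorem \ref{HRT} immediately yields $<\mathbf{h}^{eq}_i,\mathbf{h}^{eq}_j>=0$ whenever $i$ and $j$ belong to distinct groups $\Gamma_p$ and $\Gamma_q$. The remaining task is to translate this pairwise orthogonality of the columns of $\mathbf{H}_{eq}$ into the claimed block-diagonal structure of $\mathbf{R}$.

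To do this, I would prove by induction on $i$ the following span statement: if $i \in \Gamma_p$, then $\mathbf{q}_i$ lies in the real linear span of $\left\lbrace \mathbf{h}^{eq}_k : k \in \Gamma_p,\, k \leq i \right\rbrace$. The base case $i=1$ is immediate because the partition is ordered, so $1 \in \Gamma_1$, and $\mathbf{q}_1$ is just a rescaling of $\mathbf{h}^{eq}_1$. For the inductive step I would use the recursion $\mathbf{r}_i = \mathbf{h}^{eq}_i - \sum_{j<i}<\mathbf{q}_j,\mathbf{h}^{eq}_i>\mathbf{q}_j$: by the induction hypothesis, every $\mathbf{q}_j$ with $j \in \Gamma_q$ and $q \neq p$ lies in the span of columns that are orthogonal to $\mathbf{h}^{eq}_i$, so the coefficient $<\mathbf{q}_j,\mathbf{h}^{eq}_i>$ vanishes. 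Therefore $\mathbf{r}_i$, and hence $\mathbf{q}_i$, remains a linear combination of columns indexed in $\Gamma_p \cap \left\lbrace 1,\dots,i \right\rbrace$.

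Once the span statement is in hand, the off-diagonal entries $R_{ij}=<\mathbf{q}_i,\mathbf{h}^{eq}_j>$ with $i<j$ and $i,j$ in different groups must be zero, because $\mathbf{q}_i$ is a linear combination of columns indexed in the group of $i$, each of which is orthogonal to $\mathbf{h}^{eq}_j$. Since the ordered partition makes the indices of each group form a consecutive block inside $\left\lbrace 1,\dots,2\kappa \right\rbrace$, this forces $\mathbf{R}$ to take the block-diagonal form with upper-triangular diagonal blocks $\Delta_1,\dots,\Delta_g$ of sizes $|\Gamma_1|,\dots,|\Gamma_g|$, which is exactly the definition of $g-$group sphere decodability.

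The main obstacle, as I see it, is carrying out the induction cleanly while making essential use of the \emph{ordering} hypothesis. The pointwise orthogonality of columns alone only tells us that the Gram matrix $\mathbf{H}_{eq}^{t}\mathbf{H}_{eq}$ is block-diagonal up to a symmetric permutation of its rows and columns; the ordering is precisely what ensures that Gram--Schmidt, applied in the natural order, both preserves the per-group span and places the surviving nonzero blocks on the diagonal of $\mathbf{R}$ rather than scattering them throughout the matrix. Verifying that both ingredients interact correctly is where the bookkeeping has to be done carefully.
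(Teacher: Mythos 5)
Your proof is correct and takes essentially the same route as the paper: the paper states this HRQF lemma as a cited result, but its own analogue (Lemma~\ref{groupdlema}) is proved by exactly this induction through the Gram--Schmidt recursion, the only difference being that you package the induction invariant as a span containment for $\mathbf{q}_i$ within its group's columns, whereas the paper inducts directly on the vanishing of the inner products $<\mathbf{q}_k,\mathbf{h}^{eq}_j>$. Both arguments invoke column orthogonality from Theorem~\ref{HRT} and use the ordering of the partition in the same essential way, so there is nothing to add.
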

  
\begin{lemma}[\emph{Fast decodability}]
Let an STBC with $\kappa$ independent complex symbols, $2\kappa$ weight matrices and HRQF matrix $\mathbf{U}$. If there exists a partition of $\left\lbrace 1,2,...,L \right\rbrace$ where $L \leq 2\kappa$ into $g$ non empty subsets $\Gamma_1, \Gamma_2,...,\Gamma_g$ such that $U_{ij}=0$ whenever $i \in \Gamma_p$ and $j \in \Gamma_q$ and $p \neq q$, then the code is fast decodable.
\end{lemma}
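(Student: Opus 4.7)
My plan for proving this lemma is to chain together the definition of the HRQF matrix $\mathbf{U}$, Theorem~\ref{HRT} applied inside the block of the first $L$ columns, and the Gram--Schmidt construction of $\mathbf{R}$ recalled in Section~II. The overall strategy is: (i) upgrade the hypothesis $U_{ij}=0$ across different groups of the partition to a statement about Hermitian anti-commutation of the corresponding weight matrices; (ii) translate this, via Theorem~\ref{HRT}, into orthogonality of the corresponding columns $\mathbf{h}^{eq}_i,\mathbf{h}^{eq}_j$ of the equivalent channel matrix; and (iii) propagate this column-wise orthogonality through the Gram--Schmidt recurrence to obtain the inner products $\langle \mathbf{q}_i,\mathbf{h}^{eq}_j\rangle=0$ that appear in the definition of fast decodable codes.

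First I would invoke Remark~1 to assume without loss of generality that the partition $\Gamma_1,\dots,\Gamma_g$ of $\{1,\dots,L\}$ is ordered, i.e.\ that $\Gamma_1=\{1,\dots,K_1\}$, $\Gamma_2=\{K_1+1,\dots,K_1+K_2\}$, and so on up to index $L$. An arbitrary partition can be brought into this form by a suitable column permutation $\mathbf{P}_s$ of $\mathbf{G}$; by Remark~1 this permutation changes the zero pattern of $\mathbf{R}$ in a controlled way but preserves whether the code admits the fast decodable structure. Under this convention, every pair $i\in\Gamma_p$, $j\in\Gamma_q$ with $p\neq q$ and $i<j$ must satisfy $p<q$, and moreover every index $k\le i$ lies in some group $\Gamma_r$ with $r\le p$, hence outside $\Gamma_q$.

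Next I would use the definition of the HRQF matrix: $U_{ij}=\|\mathbf{A}_i\mathbf{A}_j^H+\mathbf{A}_j\mathbf{A}_i^H\|^2$, so $U_{ij}=0$ is equivalent to $\mathbf{A}_i\mathbf{A}_j^H+\mathbf{A}_j\mathbf{A}_i^H=\mathbf{0}_{n_t}$. Applying Theorem~\ref{HRT} to every pair with $i\in\Gamma_p$, $j\in\Gamma_q$, $p\neq q$, I then obtain that the columns $\mathbf{h}^{eq}_i$ and $\mathbf{h}^{eq}_j$ of $\mathbf{H}_{eq}$ are orthogonal whenever their indices lie in different groups. Fixing any $j\in\Gamma_q$, this says that $\mathbf{h}^{eq}_j$ is orthogonal to every column $\mathbf{h}^{eq}_k$ whose index $k$ is outside $\Gamma_q$.

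The key remaining step is a short Gram--Schmidt induction. For $j\in\Gamma_q$ and $i<j$ with $i\in\Gamma_p$, $p\neq q$, every index $k\in\{1,\dots,i\}$ sits in some group $\Gamma_r$ with $r\le p<q$, so by the previous step $\mathbf{h}^{eq}_j\perp\mathbf{h}^{eq}_k$. The Gram--Schmidt formulas recalled in Section~II give $\mathbf{q}_i\in\operatorname{span}(\mathbf{h}^{eq}_1,\dots,\mathbf{h}^{eq}_i)$, so $\mathbf{h}^{eq}_j$ is orthogonal to that entire span, and in particular $\langle \mathbf{q}_i,\mathbf{h}^{eq}_j\rangle=0$. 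This is exactly the condition in the definition of fast decodable codes for the first $L$ columns; the remaining $2\kappa-L$ columns are unconstrained and produce the $\mathbf{B}_1,\mathbf{B}_2$ blocks of $\mathbf{R}$. The main subtlety to handle carefully is the ordering reduction: $\mathbf{q}_i$ is built from \emph{all} previous columns $\mathbf{h}^{eq}_1,\dots,\mathbf{h}^{eq}_i$, not only from $\mathbf{h}^{eq}_i$, so the argument genuinely requires that all these earlier columns also lie outside $\Gamma_q$, which is precisely what the ordered-partition reduction via Remark~1 furnishes.
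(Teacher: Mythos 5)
Your proof is correct and follows essentially the same route as the paper: the paper states this HRQF lemma without proof (citing the literature), but its in-house analogue, Lemma~\ref{fastd} via Lemma~\ref{groupdlema}, is proved exactly as you do --- establish pairwise orthogonality of the relevant columns of $\mathbf{H}_{eq}$ (here obtained by converting $U_{ij}=0$ into HR mutual orthogonality and invoking Theorem~\ref{HRT}) and then propagate it through the Gram--Schmidt recursion. Your observation that $\mathbf{q}_i\in\operatorname{span}\left(\mathbf{h}^{eq}_1,\dots,\mathbf{h}^{eq}_i\right)$ is a clean repackaging of the paper's explicit induction, and your explicit reduction to an ordered partition makes precise a point the paper leaves implicit.
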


\begin{lemma}[\emph{Fast-group decodability}]
Let an STBC with $\kappa$ independent complex symbols, $2\kappa$ weight matrices and HRQF matrix $\mathbf{U}$. If there exists a partition of $\left\lbrace 1,2,...,L \right\rbrace$ where $L \leq 2\kappa$ into $g$ non empty subsets $\Gamma_1, \Gamma_2,...,\Gamma_g$ with cardinalities $\kappa_1,...,\kappa_g$ such that $U_{ij}=0$ whenever $i \in \Gamma_p$ and $j \in \Gamma_q$ and $p \neq q$, and if any group $\Gamma_i$ admits fast decodability, then the code is fast group decodable.
\end{lemma}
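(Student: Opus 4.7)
The plan is to reduce fast-group decodability into its two constitutive ingredients: an inter-group Hurwitz--Radon orthogonality that blocks the $\mathbf{R}$ matrix into a direct sum, and an intra-group fast decodability that refines each diagonal block. First I would observe that, by the very definition of the HRQF matrix, $U_{ij}=0$ is equivalent to $\mathbf{A}_i\mathbf{A}_j^{H}+\mathbf{A}_j\mathbf{A}_i^{H}=\mathbf{0}_{n_t}$, so the hypothesis translates the inter-group condition into the exact Hurwitz--Radon orthogonality required by Theorem~\ref{HRT}. Applying that theorem pairwise yields $\langle \mathbf{h}^{eq}_i,\mathbf{h}^{eq}_j\rangle=0$ for every $i\in\Gamma_p$, $j\in\Gamma_q$, $p\neq q$.

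Next I would fix an ordering of $\tilde{\mathbf{s}}$ so that the indices of $\Gamma_1$ come first, then those of $\Gamma_2$, and so on; by Remark~1 this only permutes the columns of $\mathbf{G}$ (and hence of $\mathbf{H}_{eq}$) and is therefore a legitimate reordering. Running Gram--Schmidt on the reordered columns, I would prove by induction on $j$ that $\mathbf{q}_j\in\mathrm{span}\{\mathbf{h}^{eq}_k : k\in\Gamma_q,\ k\le j\}$ whenever $j\in\Gamma_q$: the inductive step uses that all coefficients $\langle\mathbf{q}_k,\mathbf{h}^{eq}_j\rangle$ with $k$ in a different group vanish, thanks to the orthogonality obtained above. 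Consequently $\langle\mathbf{q}_i,\mathbf{h}^{eq}_j\rangle=0$ for $i\in\Gamma_p$, $j\in\Gamma_q$, $p\neq q$, which forces the off-diagonal blocks of $\mathbf{R}$ between distinct groups to be identically zero. This gives exactly the block-diagonal form
\begin{equation}
\mathbf{R}=\mathrm{diag}\bigl(\mathbf{R}_1,\mathbf{R}_2,\ldots,\mathbf{R}_g\bigr),\notag
\end{equation}
with each $\mathbf{R}_i$ of size $\kappa_i\times\kappa_i$ and upper triangular.

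For the second ingredient, I would note that the above argument also shows that the Gram--Schmidt process restricted to the columns indexed by $\Gamma_i$ produces the same $\mathbf{q}$-vectors and the same inner products as if those columns were processed in isolation, because no cross-group terms ever enter the recursion. Hence $\mathbf{R}_i$ coincides with the $\mathbf{R}$ matrix one would obtain from an STBC built solely from the weight matrices $\{\mathbf{A}_l\}_{l\in\Gamma_i}$. Whenever $\Gamma_i$ is assumed to admit fast decodability, I can therefore invoke the preceding fast-decodability lemma inside this group to conclude that $\mathbf{R}_i$ takes the form $\left[\begin{smallmatrix}\Delta_i & \mathbf{B}_{i_1}\\ \mathbf{0} & \mathbf{B}_{i_2}\end{smallmatrix}\right]$ with $\Delta_i$ block-diagonal upper triangular. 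Combined with the overall block-diagonal structure from the previous step, this is exactly the $\mathbf{R}$ structure required by the definition of fast-group decodable codes.

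The main obstacle, and the only non-routine point, is justifying that Gram--Schmidt is ``block-respecting''—i.e.\ that the within-group orthonormalization is unaffected by the presence of the other groups. This is precisely what the inductive span argument delivers: because every projection coefficient crossing group boundaries vanishes, the process decouples group by group. Once this is established, the two hypotheses (inter-group HRQF orthogonality and intra-group fast decodability) assemble into the desired $\mathbf{R}$ structure without further work.
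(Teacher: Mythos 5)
Your proof is correct and follows essentially the same route as the paper: translate $U_{ij}=0$ into Hurwitz--Radon mutual orthogonality, invoke Theorem~\ref{HRT} to obtain orthogonality of the corresponding columns of $\mathbf{H}_{eq}$, then run the Gram--Schmidt induction of the multi-group lemma to annihilate the cross-group entries of $\mathbf{R}$, and finally refine each diagonal block via the fast-decodability lemma. The paper compresses all of this into a one-line appeal to its earlier proofs; your explicit check that Gram--Schmidt decouples group by group (so that each $\mathbf{R}_i$ coincides with the $\mathbf{R}$ of the isolated sub-code, which is what licenses applying fast decodability inside $\Gamma_i$) is precisely the detail that appeal leaves implicit.
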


These lemmas state that the HRQF matrix totally determines the fast sphere decodability of STBCs and provide sufficient conditions for an STBC to be multi-group, fast or fast-group decodable. Nevertheless, as highlighted in \cite{Jithamithra10,Jithamithra11,Jithamithra13}, in some cases, for instance in the case of Block orthogonal codes, the HRQF approach does not capture the zero structure of the $\mathbf{R}$ matrix. In such cases, it is possible to have entries $R_{ij} \neq 0$ even if the corresponding weight matrices $\mathbf{A}_i$ and $\mathbf{A}_j$ are HR orthogonal which is equivalent to have the corresponding entry in the HRQF matrix $U_{ij}= 0$. Authors in \cite{Jithamithra10,Jithamithra11,Jithamithra13} do not provide an explanation for having such configurations. 

\section{Sufficient Design Criteria for FSD of STBCs}
In this work, we first aim to provide sufficient conditions on the structural properties of the weight matrices for an STBC that fully determine the FSD of any STBC. These design criteria are stated in theorem \ref{ourth}.

\begin{theorem}\label{ourth}
For an STBC with $k$ independent complex symbols and $2k$ weight matrices $\mathbf{A}_l$ for $l=1,...,2\kappa$, if for any $i$ and $j$, $i\neq j, 1 \leq i,j \leq 2\kappa$, one or both of the following conditions is satisfied:
\vspace{-0.5cm}
\begin{equation}\label{c1}
\sum_{l=1}^{T}\mathrm{Tr}\left( a_{ql}^{(i)}\left(a_{pl}^{(j)} \right)^{*} + a_{pl}^{(i)}\left(a_{ql}^{(j)} \right)^{*} \right) = 0 ~,~ \forall q=1,...,n_t, p=q,q+1,...,n_t  
\end{equation}
\begin{equation}\label{c2}
\sum_{l=1}^{T}\mathrm{Tr}\left(i\left[ a_{ql}^{(i)}\left(a_{pl}^{(j)} \right)^{*} - a_{pl}^{(i)}\left(a_{ql}^{(j)} \right)^{*} \right]\right) = 0 ~,~ \forall q=1,...,n_t, p=q+1,...,n_t  
\end{equation}
where $a_{ql}^{(i)}$ (resp. $a_{pl}^{(j)}$) is the entry of the matrix $\mathbf{A}_i$ (resp. $\mathbf{A}_j$) at row $q$ and column $l$ (resp. at row $p$ and column $l$), then the $i^{th}$ and $j^{th}$ columns of the equivalent channel matrix $\mathbf{H}_{eq}$ are orthogonal. Both conditions hold at the same time if and only if $a_{ql}^{(i)}$ or $a_{pl}^{(j)}=0$.
\end{theorem}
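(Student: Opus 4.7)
The plan is to convert the real inner product $\langle \mathbf{h}^{eq}_i,\mathbf{h}^{eq}_j\rangle$ into a linear combination of the entries of the Hermitian Gram matrix $\mathbf{M}=\mathbf{H}^{H}\mathbf{H}$, and then read off the algebraic conditions on $\mathbf{A}_i$ and $\mathbf{A}_j$ that force every such coefficient to vanish for \emph{every} channel realisation.

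First, I would invoke the column-by-column identity $\check{\mathbf{H}}\,\tilde{\mathbf{v}}=\widetilde{\mathbf{H}\mathbf{v}}$ to rewrite
$\mathbf{h}^{eq}_i=(\mathbf{I}_T\otimes\check{\mathbf{H}})\,\tilde{\mathrm{vec}}(\mathbf{A}_i)=\tilde{\mathrm{vec}}(\mathbf{H}\mathbf{A}_i)$,
and then combine $\tilde{\mathbf{u}}^{t}\tilde{\mathbf{v}}=\Re(\mathbf{u}^{H}\mathbf{v})$ with $\mathrm{vec}(\mathbf{X})^{H}\mathrm{vec}(\mathbf{Y})=\mathrm{tr}(\mathbf{X}^{H}\mathbf{Y})$ to obtain
\[
\langle \mathbf{h}^{eq}_i,\mathbf{h}^{eq}_j\rangle=\Re\bigl(\mathrm{tr}(\mathbf{A}_i^{H}\,\mathbf{H}^{H}\mathbf{H}\,\mathbf{A}_j)\bigr).
\]

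Writing $m_{qp}=(\mathbf{H}^{H}\mathbf{H})_{qp}$, I would expand this entrywise as $\sum_{q,p,l}m_{qp}(a_{ql}^{(i)})^{*}a_{pl}^{(j)}$, isolate the diagonal $p=q$ (where $m_{qq}\in\mathbb{R}$), and pair each $(q,p)$ with $(p,q)$ for $p>q$ via $m_{pq}=m_{qp}^{*}$. This yields
\[
\langle \mathbf{h}^{eq}_i,\mathbf{h}^{eq}_j\rangle=\sum_{q}m_{qq}\,\Re(S_{qq}^{(ij)})+\sum_{q<p}\Re\bigl(m_{qp}\,W_{qp}^{(ij)}\bigr),
\]
with $S_{qq}^{(ij)}=\sum_l(a_{ql}^{(i)})^{*}a_{ql}^{(j)}$ and $W_{qp}^{(ij)}=\sum_l\bigl[(a_{ql}^{(i)})^{*}a_{pl}^{(j)}+a_{pl}^{(i)}(a_{ql}^{(j)})^{*}\bigr]$. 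Because $\mathbf{H}$ is an arbitrary Gaussian channel, the real scalars $m_{qq}$ and the complex off-diagonals $m_{qp}$ ($p>q$) vary algebraically independently, so the expression vanishes for every $\mathbf{H}$ precisely when $\Re(S_{qq}^{(ij)})=0$ for all $q$ and $W_{qp}^{(ij)}=0$ for all $q<p$. Matching these to the hypotheses via $\mathrm{Tr}(x)=2\Re(x)$ and $\Re(ix)=-\Im(x)$, condition (\ref{c1}) at $p=q$ gives $\Re(S_{qq}^{(ij)})=0$, condition (\ref{c1}) at $p>q$ gives $\Re(W_{qp}^{(ij)})=0$, and condition (\ref{c2}) gives $\Im(W_{qp}^{(ij)})=0$; together they force $W_{qp}^{(ij)}=0$ and hence the claimed column orthogonality. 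The closing ``if and only if'' clause reduces to elementwise algebra: adding and subtracting the two trace identities inside the sums collapses them to $a_{ql}^{(i)}(a_{pl}^{(j)})^{*}=0$ and $a_{pl}^{(i)}(a_{ql}^{(j)})^{*}=0$, which forces $a_{ql}^{(i)}=0$ or $a_{pl}^{(j)}=0$.

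The main obstacle I anticipate is keeping the bookkeeping watertight between the real interleaved layout $\tilde{\mathrm{vec}}(\cdot)$ and the complex $\mathrm{vec}(\cdot)$ layout, in particular checking that the Kronecker factor $\mathbf{I}_T\otimes\check{\mathbf{H}}$ intertwines correctly with $\tilde{\mathrm{vec}}(\cdot)$ column by column so that $\mathbf{h}^{eq}_i=\tilde{\mathrm{vec}}(\mathbf{H}\mathbf{A}_i)$ and the passage to $\Re\bigl(\mathrm{tr}(\mathbf{A}_i^{H}\mathbf{H}^{H}\mathbf{H}\mathbf{A}_j)\bigr)$ are unambiguous; once that structural step is settled, the remainder is a clean separation of the quadratic form in $\mathbf{M}$ into independent real and imaginary coefficients and their identification with the two trace conditions.
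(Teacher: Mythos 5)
Your argument for the orthogonality claim is correct and is in substance the paper's own proof. The paper expands $\langle\mathbf{h}^{eq}_i,\mathbf{h}^{eq}_j\rangle=\sum_{l}\tilde{\mathbf{a}}_{il}^{t}\mathbf{M}\tilde{\mathbf{a}}_{jl}$ with the real Gram matrix $\mathbf{M}=\check{\mathbf{H}}^{t}\check{\mathbf{H}}$ and uses the entry-wise symmetries of $\mathbf{M}$ (its Appendix) to regroup the terms into the two trace sums weighted by $M_{2q-1,2p-1}$ and $M_{2q-1,2p}$; you perform the identical regrouping in complex coordinates, where those symmetries are nothing but the Hermitian property of $\mathbf{H}^{H}\mathbf{H}$, so your $\Re(m_{qp})$ and $\Im(m_{qp})$ play exactly the role of the paper's $M_{2q-1,2p-1}$ and $M_{2q-1,2p}$. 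Your extra remark on the algebraic independence of the $m_{qp}$ buys a converse (necessity for orthogonality under every channel) that the theorem does not claim; for sufficiency it is not needed. Note also that, like the paper's own proof, your argument in fact uses \emph{both} conditions (\ref{c1}) and (\ref{c2}) simultaneously --- the ``one or both'' phrasing of the statement is not what either proof establishes.

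The one step that does not go through is your treatment of the closing ``if and only if'' clause. Writing $u=\sum_{l} a^{(i)}_{ql}\left(a^{(j)}_{pl}\right)^{*}$ and $v=\sum_{l} a^{(i)}_{pl}\left(a^{(j)}_{ql}\right)^{*}$, conditions (\ref{c1}) and (\ref{c2}) together assert only $\Re(u+v)=0$ and $\Im(u-v)=0$; adding and subtracting these two real equations cannot force the four real quantities $\Re(u),\Im(u),\Re(v),\Im(v)$ to vanish, let alone each individual summand $a^{(i)}_{ql}\left(a^{(j)}_{pl}\right)^{*}$. The clause is only meaningful term by term: for a single product $z=a^{(i)}_{ql}\left(a^{(j)}_{pl}\right)^{*}$ one has $\mathrm{Tr}(z)=0$ and $\mathrm{Tr}(iz)=0$ simultaneously iff $z=0$, i.e.\ iff one of the two factors vanishes, which is the per-term observation the paper actually makes inside its proof rather than a consequence of the summed conditions.
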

\begin{proof}
We know from the expression of the equivalent channel matrix that:
\begin{equation}
<\mathbf{h}_{i}^{eq}, \mathbf{h}_{j}^{eq}> = \sum_{l=1}^{T} \tilde{\mathbf{a}}_{il}^{t} \check{\mathbf{H}}_{eq}^{t} \check{\mathbf{H}}_{eq} \tilde{\mathbf{a}}_{jl} = \sum_{l=1}^{T} \tilde{\mathbf{a}}_{il}^{t} \mathbf{M} \tilde{\mathbf{a}}_{jl}
\end{equation}
where $\mathbf{a}_{il}, \mathbf{a}_{jl}, l=1,...,T$ are the $l^{th}$ columns of respectively the weight matrix $\mathbf{A}_i$ and $\mathbf{A}_j$. 
%Using the operator $(\tilde{.})$, we prove in appendix (\ref{append1}) that the matrix $\mathbf{M}$ is symmetric and satistifies the following properties:
 Let $T_l=\tilde{\mathbf{a}}_{il}^{t} \mathbf{M} \tilde{\mathbf{a}}_{jl}$. By deriving the computation and using the properties that $\Re{(x)}\Re{(y)}+\Im{(x)}\Im{(y)}=\Re{(xy^{*})}$ and $\Re{(x)}\Im{(y)}-\Im{(x)}\Re{(y)}=\Re{(ixy^{*})}$ we get:
 \begin{align}
 T_{l} = \underbrace{\frac{1}{2} \sum_{q=1}^{n_t}\sum_{p=1}^{n_t}M_{2p-1,2q-1} \mathrm{Tr}\left( a_{ql}^{(i)}\left(a_{pl}^{(j)} \right)^{*} \right)}_{A_l} + \underbrace{\frac{1}{2}\sum_{q=1}^{n_t} \sum_{p=1}^{n_t} M_{2q-1,2p} \mathrm{Tr}\left( i a_{ql}^{(i)}\left(a_{pl}^{(j)} \right)^{*} \right)}_{B_l} \notag
 \end{align}
 Notice that the terms $\mathrm{Tr}\left( a_{ql}^{(i)}\left(a_{pl}^{(j)} \right)^{*} \right)$ in $A_l$ and $\mathrm{Tr}\left( i a_{ql}^{(i)}\left(a_{pl}^{(j)} \right)^{*} \right)$ in $B_l$ are equal to zeros at the same time if and only if either $ a_{ql}^{(i)}=0$ or $ a_{pl}^{(j)}$. Now in order to simplify the terms $A_l$ and $B_l$ we use the properties of the matrix $\mathbf{M}$ proved in appendix \ref{append1}. We obtain for $A_l$, using the symmetry of the matrix $\mathbf{M}$, the following: 
 \begin{align}
 A_l &= \frac{1}{2} \sum_{q=1}^{n_t}M_{2q-1,2q-1} \mathrm{Tr}\left( a_{ql}^{(i)}\left(a_{ql}^{(j)} \right)^{*} \right) + \frac{1}{2}\sum_{q=1}^{n_t} \sum_{\tiny\begin{array}{c}
 p=1 \\ p \neq q\end{array}}^{n_t} M_{2q-1,2p-1} \mathrm{Tr}\left( a_{ql}^{(i)}\left(a_{pl}^{(j)} \right)^{*} \right) \notag \\
 &= \frac{1}{2} \sum_{q=1}^{n_t}M_{2q-1,2q-1} \mathrm{Tr}\left( a_{ql}^{(i)}\left(a_{ql}^{(j)} \right)^{*} \right) + \frac{1}{2}\sum_{q=1}^{n_t} \sum_{p=q+1}^{n_t}M_{2q-1,2p-1}\mathrm{Tr}\left( a_{ql}^{(i)}\left(a_{pl}^{(j)} \right)^{*} + a_{pl}^{(i)}\left(a_{ql}^{(j)} \right)^{*} \right) \notag
 \end{align}
And for the term $B_l$, we use the symmetry of $\mathbf{M}$ and the properties that $M_{2q-1,2q}=0, \forall q=1,...,n_t$ and $M_{2q-1,2p}=-M_{2q,2p-1}, \forall 1 \leq q < p \leq n_t$ and obtain:
 \begin{align}
 B_l &= \frac{1}{2}\sum_{q=1}^{n_t} \sum_{\tiny\begin{array}{c}
 p=1 \\ p \neq q\end{array}}^{n_t} M_{2q-1,2p} \mathrm{Tr}\left( i a_{ql}^{(i)}\left(a_{pl}^{(j)} \right)^{*} \right) = \frac{1}{2}\sum_{q=1}^{n_t}\sum_{p=q+1}^{n_t}M_{2q-1,2p}\mathrm{Tr}\left(i\left[ a_{ql}^{(i)}\left(a_{pl}^{(j)} \right)^{*} - a_{pl}^{(i)}\left(a_{ql}^{(j)} \right)^{*} \right]\right) \notag
 \end{align}
The proof of the theorem follows given $<\mathbf{h}_{i}^{eq}, \mathbf{h}_{j}^{eq}> = \displaystyle\sum_{l=1}^{T} T_l$.
\end{proof}

Theorem \ref{ourth} states a component-wise mutual orthogonality criterion involving the entries of weight matrices corresponding to column vectors of the equivalent channel matrix. In order to show the suboptimality of the HRQF-based approach, we first provide in the following lemma the design conditions we derive based on the mutual orthogonality property of weight matrices as proposed in Theorem (\ref{HRT}) and compare them subsequently to the conditions considered in the HRQF-based method explaining why the latter does not allow to capture all the families of low-complexity ML decoding STBCs in contrast to our derived sufficient design conditions in Theorem \ref{ourth}.

\begin{lemma}\label{var1}
Consider an STBC with $\kappa$ independent complex information symbols and $2\kappa$ weight matrices $\mathbf{A}_l, l=1,...,2\kappa$. For any $i$ and $j$, $1 \leq i,j \leq 2k$ the matrices $\mathbf{A}_i$ and $\mathbf{A}_j$ are mutually orthogonal, i.e. satisfy $\mathbf{A}_i \mathbf{A}_j^{H}+\mathbf{A}_j\mathbf{A}_i^{H}=\mathbf{0}_{n_t}$ if and only if both following conditions are met:
\vspace{-0.45cm}
\begin{align}\label{cond}
 \sum_{l=1}^{T}\mathrm{Tr}\left( a_{pl}^{(i)}\left(a_{ql}^{(j)} \right)^{*} \right)=0 ~,~\forall p=1,...,n_t ~;~ q=p,...,n_t \\
 \sum_{l=1}^{T}\mathrm{Tr}\left( i a_{pl}^{(i)}\left(a_{ql}^{(j)} \right)^{*} \right)=0 ~,~\forall p=1,...,n_t, q=p,...,n_t 
\end{align}
\end{lemma}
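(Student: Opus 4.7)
The plan is to prove the lemma by unpacking the matrix identity $\mathbf{A}_i\mathbf{A}_j^H + \mathbf{A}_j\mathbf{A}_i^H = \mathbf{0}_{n_t}$ coordinate by coordinate and then converting each resulting complex coordinate equation into two real trace equations. First I would set $Y := \mathbf{A}_i\mathbf{A}_j^H + \mathbf{A}_j\mathbf{A}_i^H$ and expand
\begin{equation*}
Y_{pq} = \sum_{l=1}^{T}\bigl[a_{pl}^{(i)}(a_{ql}^{(j)})^{*} + a_{pl}^{(j)}(a_{ql}^{(i)})^{*}\bigr].
\end{equation*}
A direct check shows that $Y$ is Hermitian, so that $Y=\mathbf{0}_{n_t}$ is equivalent to the scalar equations $Y_{pq}=0$ restricted to indices $p\leq q$, with each diagonal entry $Y_{pp}$ automatically real.

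Next I would split each complex equation $Y_{pq}=0$ into its real and imaginary parts via the identities $\Re(x)=\tfrac{1}{2}\mathrm{Tr}(x)$ and $\Im(x)=-\tfrac{1}{2}\mathrm{Tr}(ix)$ attached to the trace form $\mathrm{Tr}_{\mathbb{Q}(i)/\mathbb{Q}}$ introduced in the notation paragraph. The key algebraic observation is the conjugate relation $a_{pl}^{(j)}(a_{ql}^{(i)})^{*}=\overline{a_{ql}^{(i)}(a_{pl}^{(j)})^{*}}$, which rewrites the second summand of $Y_{pq}$ as the conjugate of a term of the same shape as the first summand with the row indices $p$ and $q$ exchanged. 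Feeding this into the real and imaginary parts of $Y_{pq}=0$ and using linearity of the trace should collapse the pair of equations into the two scalar trace identities displayed in the lemma, indexed by $q\geq p$. The converse direction is then immediate by reading the same chain of equivalences backwards: the two trace conditions together reconstruct $\Re(Y_{pq})=0$ and $\Im(Y_{pq})=0$ on the upper triangle, and Hermitian symmetry of $Y$ extends the vanishing to the lower triangle, yielding $Y=\mathbf{0}_{n_t}$.

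The delicate point, and the main obstacle, will be the index bookkeeping in the absorption step that fuses the second summand of $Y_{pq}$ back into a condition on the first. Once conjugated, that term carries $(q,p)$ rather than $(p,q)$, and one must check carefully that the stated index range $q\geq p$ in both conditions of the lemma is exactly what the Hermitian reduction of $Y$ allows (with the imaginary-part condition being automatically trivial at $p=q$). This is structurally the same manipulation used in the proof of Theorem \ref{ourth}, except that here the role of the Gram-type weighting $\check{\mathbf{H}}_{eq}^{t}\check{\mathbf{H}}_{eq}$ is played by the identity, so the coefficient gymnastics involving $M_{2q-1,2p-1}$ and $M_{2q-1,2p}$ disappear and the conditions take the simpler form of the lemma. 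Verifying this collapse, and in particular checking that no further strict-inequality restriction analogous to the one in Theorem \ref{ourth} is needed, is the only real bookkeeping step in the proof.
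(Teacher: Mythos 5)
Your overall route coincides with the paper's: expand the Hurwitz--Radon identity entry by entry, use the (skew-)Hermitian symmetry to restrict to indices $p\leq q$, and convert each scalar equation into two real conditions via $\Re(x)=\tfrac12\mathrm{Tr}(x)$ and $\Im(x)=-\tfrac12\mathrm{Tr}(ix)$. The only difference in packaging is that the paper first passes to the real representation, setting $\mathbf{V}=\check{\mathbf{A}}_i(\check{\mathbf{A}}_j)^{t}+\check{\mathbf{A}}_j(\check{\mathbf{A}}_i)^{t}$ and reading the trace conditions off the entries of this symmetric $2n_t\times 2n_t$ matrix, while you stay with the complex Hermitian matrix $Y$; under the $\check{(.)}$ isomorphism these are the same computation.

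The problem is that the one step you defer as ``the only real bookkeeping step'' is precisely where the argument breaks, and the collapse you hope for does not occur. Writing $z_{pq}=\sum_{l}a^{(i)}_{pl}(a^{(j)}_{ql})^{*}$, you have $Y_{pq}=z_{pq}+\overline{z_{qp}}$, hence
\begin{equation*}
\Re(Y_{pq})=\tfrac12\sum_{l}\mathrm{Tr}\!\left(a^{(i)}_{pl}(a^{(j)}_{ql})^{*}+a^{(i)}_{ql}(a^{(j)}_{pl})^{*}\right),\qquad
\Im(Y_{pq})=-\tfrac12\sum_{l}\mathrm{Tr}\!\left(i\!\left[a^{(i)}_{pl}(a^{(j)}_{ql})^{*}-a^{(i)}_{ql}(a^{(j)}_{pl})^{*}\right]\right),
\end{equation*}
so $Y=\mathbf{0}_{n_t}$ is equivalent to the \emph{symmetrized} pair of conditions (\ref{c1}) and (\ref{c2}) of Theorem \ref{ourth} holding simultaneously, not to the two separate unsymmetrized trace conditions of the lemma. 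The conjugated second summand genuinely carries the transposed index pair, and there is no identity that fuses $z_{pq}$ and $\overline{z_{qp}}$ into a statement about $z_{pq}$ alone: from $Y=\mathbf{0}$ one only recovers their symmetric and antisymmetric combinations. In particular the ``only if'' direction cannot be obtained this way, since a skew-Hermitian $\mathbf{A}_i\mathbf{A}_j^{H}$ need not have vanishing upper-triangular or diagonal entries (e.g.\ $n_t=2$, $T=1$, $\mathbf{A}_j=(1,1)^{t}$, $\mathbf{A}_i=i\mathbf{A}_j$ satisfies the HR identity yet $\mathrm{Tr}(i\,a^{(i)}_{11}(a^{(j)}_{21})^{*})=-2\neq 0$). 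Be aware that you will not find the missing step in the paper either: its proof states the off-diagonal entries of $\mathbf{V}$ as $V_{2p-1,2q-1}=\sum_{l}\mathrm{Tr}(a^{(i)}_{pl}(a^{(j)}_{ql})^{*})$, silently dropping the second, transposed term, i.e.\ it performs exactly the fusion you correctly suspected to be delicate. A correct version of this computation yields Lemma \ref{var1} with the symmetrized conditions of Theorem \ref{ourth} in place of (\ref{cond}), which is in fact what the paper's surrounding discussion of the suboptimality of the HRQF approach relies on.
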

\begin{proof}
We know from the properties of the $(\check{.})$ operation that $\mathbf{A}=\mathbf{B}\mathbf{C} \Leftrightarrow \check{\mathbf{A}}=\check{\mathbf{B}}\check{\mathbf{C}}$. Using this property, we have:
\begin{align}\label{property}
\mathbf{A}_i \mathbf{A}_j^{H}+\mathbf{A}_j\mathbf{A}_i^{H}=\mathbf{0}_{n_t} & \Leftrightarrow \check{\mathbf{A}}_i\left(\check{\mathbf{A}}_j \right)^{t} + \check{\mathbf{A}}_j\left(\check{\mathbf{A}}_i \right)^{t} = \mathbf{O}_{2n_t} 
\end{align}
Let the matrix $\mathbf{V}=\check{\mathbf{A}}_i\left(\check{\mathbf{A}}_j \right)^{t}+\check{\mathbf{A}}_j\left(\check{\mathbf{A}}_i \right)^{t}$. Then, equation (\ref{property}) is equivalent to $\mathbf{V}=\mathbf{0}_{2n_t}$. $\mathbf{V}$ is symmetric and its entries are given by:
\begin{align}
& V_{2p-1,2p-1} = V_{2p,2p} = \sum_{l=1}^{T} \mathrm{Tr}\left(a_{pl}^{(i)}\left(a_{pl}^{(j)} \right)^{*}\right) ~,~ \forall p=1...,n_t \notag \\
& V_{2p-1,2q} = V_{2p,2q-1}= \sum_{l=1}^{T}\mathrm{Tr}\left(i a_{pl}^{(i)}\left(a_{ql}^{(j)} \right)^{*}\right) ~,~ \forall p=1,...,n_t ~;~ q=p+1,...,n_t\notag \\
& V_{2p-1,2q-1} = V_{2p,2q}= \sum_{l=1}^{T}\mathrm{Tr}\left(a_{pl}^{(i)}\left(a_{ql}^{(j)} \right)^{*}\right) ~,~ p=1,...,n_t ~;~ q= p+1,...,n_t\notag 
\end{align}
Having the matrix $\mathbf{V}$ equal to 0 ends the proof.
\end{proof}
Now for what concerns the design criteria proposed in literature based on HR-theory and the HRQF matrix, authors in \cite{Jithamithra10,Jithamithra13} and mainly in Theorem 2 in \cite{Srinath09} provide a sufficient condition and use in their proof that if $\mathbf{A}_i\mathbf{A}_j^{H}+\mathbf{A}_j\mathbf{A}_i^{H}=\mathbf{0}_{n_t}$ then the matrix $\check{\mathbf{A}}_i\left(\check{\mathbf{A}}_j \right)^{t}$ is skew-symmetric, thus its diagonal elements are zeros. Accordingly, the conditions to obtain orthogonality between the corresponding columns of the equivalent channel matrix are given by: $V_{2p-1,2p-1} = V_{2p,2p}= \sum_{l=1}^{T}  \mathrm{Tr}\left( a_{pl}^{(i)}\left(a_{pl}^{(j)} \right)^{*} \right) =0$.

Thus, compared to the conditions proposed in Lemma \ref{var1}, we can easily deduce the suboptimality of these conditions: first the sufficient conditions consider only the diagonal elements, second, even the criteria derived in Lemma \ref{var1} impose conditions more than required to obtain orthogonality between two columns in the equivalent channel matrix as proposed using the HR mutual orthogonality and HRQF approach. These criteria capture the summation of the trace forms of the components $a_{pl}^{(i)}\left(a_{pl}^{(j)} \right)^{*}$ and impose that this summation be zero. However, as proved in Theorem \ref{ourth}, in order to have the $i^{th}$ and $j^{th}$ columns of $\mathbf{H}_{eq}$ orthogonal, it is sufficient to have the crossed trace forms for $\left( a_{pl}^{(i)}\left(a_{pl}^{(j)} \right)^{*}\pm a_{ql}^{(i)}\left(a_{pl}^{(j)} \right)^{*} \right)$ equal to 0. In such cases, the HR mutual orthogonality is satisfied and the entry $U_{ij}$ of the HRQF matrix is equal to 0 without having orthogonality of columns $i$ and $j$ of the equivalent channel matrix and thus the corresponding entry in the $\mathbf{R}$ matrix $R_{ij} \neq 0$.

%\vspace{0.25cm}
Having derived the sufficient design criteria for having orthogonality of columns in the equivalent channel matrix, we apply these conditions and show that these criteria are enough to determine the FSD complexity of an STBC, prove that the FSD complexity depends only on the weight matrices and their ordering and not on the channel matrix or the number of receive antennas. We start our study with the class of multi-group decodable codes.

\begin{lemma}\label{groupdlema}
Let an STBC with $\kappa$ independent complex symbols and $2\kappa$ weight matrices. If there exists an ordered partition of $\left\lbrace 1,2,...,2\kappa \right\rbrace$ into $g$ non empty subsets $\Gamma_1,...,\Gamma_g$ such that at least one of the following conditions is satisfied:
\begin{equation}\label{conditions11gd}
\sum_{l=1}^{T}\mathrm{Tr}\left( a_{ql}^{(i)}\left(a_{pl}^{(j)} \right)^{*} + a_{pl}^{(i)}\left(a_{ql}^{(j)} \right)^{*} \right) = 0 ~,~ \forall q=1,...,n_t, p=q,q+1,...,n_t  
\end{equation}
\begin{equation}\label{conditions22gd}
\sum_{l=1}^{T}\mathrm{Tr}\left(i\left[ a_{ql}^{(i)}\left(a_{pl}^{(j)} \right)^{*} - a_{pl}^{(i)}\left(a_{ql}^{(j)} \right)^{*} \right]\right) = 0 ~,~ \forall q=1,...,n_t, p=q+1,...,n_t  
\end{equation}
 whenever $i \in \Gamma_m$ and $j \in \Gamma_n$ and $m \neq n$, then the code is $g-$group sphere decodable.
\end{lemma}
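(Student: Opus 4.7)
The plan is to use Theorem \ref{ourth} as a black box to convert the component-wise trace conditions into orthogonality of columns of $\mathbf{H}_{eq}$, and then to show that Gram--Schmidt preserves the block structure induced by the ordered partition, forcing $\mathbf{R}$ into the required block-diagonal form.

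First I would perform a pointwise application of Theorem \ref{ourth}. For every pair $(i,j)$ with $i \in \Gamma_m$, $j \in \Gamma_n$, and $m \neq n$, the hypothesis guarantees at least one of (\ref{conditions11gd}) or (\ref{conditions22gd}) holds, so the theorem yields $\langle \mathbf{h}_i^{eq}, \mathbf{h}_j^{eq} \rangle = 0$. Because the partition is \emph{ordered}, the columns of $\mathbf{H}_{eq}$ group into $g$ consecutive blocks indexed by $\Gamma_1,\ldots,\Gamma_g$, and the Gram matrix $\mathbf{H}_{eq}^{t}\mathbf{H}_{eq}$ is block-diagonal with one block per group.

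Next I would prove by induction on $j$ the claim: if $j \in \Gamma_n$, then $\mathbf{q}_j \in \mathrm{span}\{\mathbf{h}_k^{eq} : k \in \Gamma_n,\; k \leq j\}$. The base case is immediate since $\mathbf{q}_1 = \mathbf{h}_1^{eq}/\|\mathbf{h}_1^{eq}\|$ and $1 \in \Gamma_1$. For the inductive step, recall that
\begin{equation}
\mathbf{r}_j = \mathbf{h}_j^{eq} - \sum_{k=1}^{j-1} \langle \mathbf{q}_k, \mathbf{h}_j^{eq} \rangle \mathbf{q}_k.
\end{equation}
For any $k < j$ with $k \in \Gamma_m$ and $m \neq n$, the inductive hypothesis places $\mathbf{q}_k$ inside the span of columns $\mathbf{h}_{k'}^{eq}$ with $k' \in \Gamma_m$, and each such column is orthogonal to $\mathbf{h}_j^{eq}$ by the first step; hence $\langle \mathbf{q}_k, \mathbf{h}_j^{eq} \rangle = 0$. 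Only indices $k \in \Gamma_n$ with $k < j$ contribute a nonzero coefficient, so $\mathbf{r}_j$, and therefore $\mathbf{q}_j$, lies in the asserted span.

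Finally I would read off the zero structure of $\mathbf{R}$. The induction gives $R_{kj} = \langle \mathbf{q}_k, \mathbf{h}_j^{eq} \rangle = 0$ for every $k \in \Gamma_m$, $j \in \Gamma_n$ with $m \neq n$, so $\mathbf{R}$ decomposes as $\mathrm{diag}(\Delta_1,\ldots,\Delta_g)$ where each $\Delta_i$ is a square upper-triangular matrix of size $|\Gamma_i| \times |\Gamma_i|$. This is exactly the form in the definition of a $g$-group sphere decodable code, finishing the proof. The only delicate part is the induction: one must verify that the Gram--Schmidt orthogonalization at index $j$ never needs to cancel contributions coming from columns of a different group, and this is precisely where both the ordered nature of the partition and the column-orthogonality from Theorem \ref{ourth} must be used together.
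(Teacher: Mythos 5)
Your proof is correct and follows essentially the same route as the paper's: invoke Theorem \ref{ourth} to obtain pairwise orthogonality of the columns of $\mathbf{H}_{eq}$ belonging to different groups, then induct through the Gram--Schmidt recursion to show the off-diagonal blocks of $\mathbf{R}$ vanish. Your invariant (that $\mathbf{q}_j$ lies in the span of the earlier columns of its own group) is merely a cleaner packaging of the paper's group-by-group induction on the inner products $\langle\mathbf{q}_i,\mathbf{h}_j^{eq}\rangle$; the underlying argument is identical.
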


\begin{proof}
Let $\mathbf{R}$ be the matrix obtained from the QR decomposition of the equivalent channel matrix $\mathbf{H}_{eq}=\mathbf{Q}\mathbf{R}$. In order to show that the code is $g-$group decodable, we need to show that $R_{ij}=0$ whenever $i \in \Gamma_m$ and $j \in \Gamma_n$ and $m \neq n$. We know from theorem \ref{ourth} that if equations (\ref{conditions11gd}) and (\ref{conditions22gd})  are satisfied, then the $i^{th}$ and $j^{th}$ columns of $\mathbf{H}_{eq}$ are orthogonal, i.e. $<\mathbf{h}_i^{eq},\mathbf{h}_{j}^{eq}>=0$. Let $L_{u}=\sum_{v=1}^{u}|\Gamma_v|$ where $u=1,2,...,g$ and $L_0=0$. For any group $\Gamma_u$, we need to show that $R_{ij}=0$ for $L_{u-1}+1 \leq i \leq L_u$ and $L_u+1 \leq j \leq 2\kappa$. The proof is performed by induction. Consider the first group $\Gamma_1$. We have, for $1 \leq i \leq L_1$ and any $L_1 +1 \leq j \leq 2\kappa$:
\begin{equation}\label{conditions111}
\sum_{l=1}^{T}\mathrm{Tr}\left( a_{ql}^{(i)}\left(a_{pl}^{(j)} \right)^{*} + a_{pl}^{(i)}\left(a_{ql}^{(j)} \right)^{*} \right) = 0 ~,~ \forall q=1,...,n_t, p=q,q+1,...,n_t  
\end{equation}
\begin{equation}\label{conditions222}
\sum_{l=1}^{T}\mathrm{Tr}\left(i\left[ a_{ql}^{(i)}\left(a_{pl}^{(j)} \right)^{*} - a_{pl}^{(i)}\left(a_{ql}^{(j)} \right)^{*} \right]\right) = 0 ~,~ \forall q=1,...,n_t, p=q+1,...,n_t  
\end{equation}
We need to show that $R_{ij}=0$. For $i=1$ and any $j \geq L_1+1$, we have:
\vspace{-0.25cm}
\begin{equation}
R_{1j} = <\mathbf{q}_1,\mathbf{h}^{eq}_j> = \frac{1}{\parallel \mathbf{h}_1^{eq}\parallel}<\mathbf{h}^{eq}_1,\mathbf{h}^{eq}_j> = 0
\end{equation} 
given that $\mathbf{q}_1=\frac{1}{\parallel \mathbf{h}_1^{eq}\parallel}\mathbf{h}_1^{eq}$ and equations (\ref{conditions111}) and (\ref{conditions222}) are satisfied. Now let the induction hypothesis be true, i.e. let $<\mathbf{q}_k,\mathbf{h}^{eq}_j>=0$ for all $k < i$ for any $i$ such that $1 \leq i \leq L_1$. We have:
%\vspace{-0.5cm}
\begin{align}
<\mathbf{q}_i,\mathbf{h}^{eq}_j> &= \frac{1}{\parallel \mathbf{r}_i\parallel} \left[ <\mathbf{h}^{eq}_i,\mathbf{h}^{eq}_j> - \displaystyle\sum_{k=1}^{i-1}<\mathbf{q}_k,\mathbf{h}^{eq}_i><\mathbf{q}_k,\mathbf{h}^{eq}_j> \right]= 0 \notag 
\end{align}
Given that $<\mathbf{h}^{eq}_i,\mathbf{h}^{eq}_j>$ from the conditions in equations (\ref{conditions11gd}) and (\ref{conditions22gd}) and $<\mathbf{q}_k,\mathbf{h}^{eq}_i>=0$ for $k < i$ by induction hypothesis.
Now we consider the $t^{th}$ group $\Gamma_t$. Let the induction hypothesis hold for all groups $1,2,...,t-1$. Consider $R_{ij}$ with $L_{t-1}+1 \leq i \leq  L_t$ and $L_t+1 \leq j \leq 2\kappa$. We have,
\vspace{-0.5cm}
\begin{align}
R_{ij} &= <\mathbf{q}_i,\mathbf{h}^{eq}_j> = \frac{1}{\parallel \mathbf{r}_i\parallel} \left[ <\mathbf{h}^{eq}_i - \displaystyle\sum_{k=1}^{i-1}<\mathbf{q}_k,\mathbf{h}^{eq}_i>\mathbf{q}_k,\mathbf{h}^{eq}_j> \right] \notag \\
&= \frac{1}{\parallel \mathbf{r}_i\parallel} \left[ <\mathbf{h}^{eq}_i,\mathbf{h}^{eq}_j> - \displaystyle\sum_{k=1}^{i-1}<\mathbf{q}_k,\mathbf{h}^{eq}_i><\mathbf{q}_k,\mathbf{h}^{eq}_j> \right] = 0
\end{align}
given that $<\mathbf{h}^{eq}_i,\mathbf{h}^{eq}_j>=0$ from conditions in equations (\ref{conditions11gd}) and (\ref{conditions22gd}) and $<\mathbf{q}_k,\mathbf{h}^{eq}_i>=0$ for $k < i$ by the induction hypothesis.
\end{proof}
\vspace{-0.75cm}
\paragraph*{\textbf{Example 5:} we aim here, through the example of the ABBA code, to validate our lemma \ref{groupdlema} for multi-group codes construction. For this purpose and given the codeword matrix in equation (\ref{mat_abba}), we need first to explicitly write the weight matrices of the ABBA code. As the symbols $x_i, i=1,...,4$ are reals, they correspond to real and imaginary parts of a 2 encoded complex-valued symbols $s_1$ and $s_2$ as: $x_1 = \Re{(s_1)} ~,~ x_2 = \Im{(s_1)} ~,~ x_3 = \Re{(s_2)} ~,~ x_4 = \Im{(s_2)}$. Accordingly, the codeword matrix can be written as:
\begin{align}
\mathbf{X}_{ABBA} &= \left[ \begin{array}{cc}
\Re{(s_1)}+i\Im{(s_2)} & -\Im{(s_1)}+i\Re{(s_2)} \\
-\Im{(s_1)}+i\Re{(s_2)} & \Re{(s_1)} + i \Im{(s_2)} 
\end{array} \right] = \Re{(s_1)}\mathbf{A}_1 + \Im{(s_1)}\mathbf{A}_2 + \Re{(s_2)}\mathbf{A}_3 + i \Im{(s_2)}\mathbf{A}_4 \notag
\end{align}
\vspace{-0.25cm}
Then the $4$ weight matrices are given by:
\begin{align}
\mathbf{A}_1 = \left[ \begin{array}{cc}
1 & 0 \\ 0 & 1
\end{array} \right] ~,~ \mathbf{A}_2 = \left[ \begin{array}{cc}
0 & -1 \\ -1 & 0
\end{array} \right] ~,~ \mathbf{A}_3 = \left[ \begin{array}{cc}
0 & i \\ i & 0
\end{array} \right] ~,~ \mathbf{A}_4 = \left[ \begin{array}{cc}
i & 0 \\ 0 & i
\end{array} \right]
\end{align}
Now, in order to determine the zero structure of the corresponding $\mathbf{R}$ matrix, we check the design conditions according to the lemma (\ref{groupdlema}). We have the following:
\begin{align}
\left\lbrace \mathbf{A}_1, \mathbf{A}_3 \right\rbrace : & l=1 : \mathrm{Tr}\left(1(0)^{*}\right)=0,\mathrm{Tr}\left(0(i)^{*}\right)=0    ,\mathrm{Tr}\left(1(i)^{*}+0(0)^{*}\right)=0\notag \\
& l=2 : \mathrm{Tr}\left(0(i)^{*}\right)=0,\mathrm{Tr}\left(1(0)^{*}\right)=0  ,\mathrm{Tr}\left(0(0)^{*}+1(i)^{*}\right)=0 \notag 
\end{align}
\begin{align}
\left\lbrace \mathbf{A}_1, \mathbf{A}_4 \right\rbrace : & l=1 : \mathrm{Tr}\left(1(i)^{*}\right)=0,\mathrm{Tr}\left(0(0)^{*}\right)=0,\mathrm{Tr}\left(1(0)^{*}+0(i)^{*}\right)=0 \notag \\
& l=2 : \mathrm{Tr}\left(0(0)^{*}\right)=0,\mathrm{Tr}\left(1(i)^{*}\right)=0,\mathrm{Tr}\left(0(i)^{*}+1(0)^{*}\right)=0 \notag 
\end{align}
\begin{align}
\left\lbrace \mathbf{A}_2, \mathbf{A}_3 \right\rbrace : & l=1 : \mathrm{Tr}\left(0(0)^{*}\right)=0,\mathrm{Tr}\left((-1)(i)^{*}\right)=0,\mathrm{Tr}\left(0(i)^{*}+(-1)(0)^{*}\right)=0 \notag \\
& l=2 : \mathrm{Tr}\left((-1)(i)^{*}\right)=0,\mathrm{Tr}\left(0(0)^{*}\right)=0,\mathrm{Tr}\left((-1)(0)^{*}+0(i)^{*}\right)=0 \notag 
\end{align}
\begin{align}
\left\lbrace \mathbf{A}_2, \mathbf{A}_4 \right\rbrace : & l=1 : \mathrm{Tr}\left(0(i)^{*}\right)=0,\mathrm{Tr}\left((-1)(0)^{*}\right)=0,\mathrm{Tr}\left(0(0)^{*}+(-1)(i)^{*}\right)=0 \notag \\
& l=2 : \mathrm{Tr}\left((-1)(0)^{*}\right)=0,\mathrm{Tr}\left(0(i)^{*}\right)=0,\mathrm{Tr}\left((-1)(i)^{*}+0(0)^{*}\right)=0 \notag 
\end{align}
Notice that for each couple of matrices, the individual trace forms are equal to zero for each of the columns $l=1$ and $l=2$, thus the summation of the trace forms over the two columns is equal to zero which validates the conditions of (\ref{groupdlema}). Thus, there exists an ordered partition of $\left\lbrace 1,...,4 \right\rbrace$ into $2$ non-empty subsets $\Gamma_1=\left\lbrace \mathbf{A}_1,\mathbf{A}_2 \right\rbrace$ and $\Gamma_2=\left\lbrace \mathbf{A}_3,\mathbf{A}_4 \right\rbrace$ such that for all $l=1,2$ the conditions of equation (\ref{conditions11gd}) are satisfied, then the code is $2-$group decodable. This zero structure confirms the form of the matrix $\mathbf{R}$ plotted through figure (\ref{abbafig}) obtained via simulations.  
}
\begin{lemma}\label{fastd}
Let an STBC with $\kappa$ independent complex symbols and $2\kappa$ weight matrices. If there exists a partition of $\left\lbrace 1,2,...,L \right\rbrace$ where $L \leq 2\kappa$ into $g$ non empty subsets $\Gamma_1, \Gamma_2,...,\Gamma_g$ such that:
\vspace{-0.25cm}
\begin{equation}\label{conditions2}
\sum_{l=1}^{T}\mathrm{Tr}\left( a_{ql}^{(i)}\left(a_{pl}^{(j)} \right)^{*} + a_{pl}^{(i)}\left(a_{ql}^{(j)} \right)^{*} \right) = 0 ~,~ \forall q=1,...,n_t, p=q,q+1,...,n_t  
\end{equation}
\begin{equation}\label{conditions2bis}
\sum_{l=1}^{T}\mathrm{Tr}\left(i\left[ a_{ql}^{(i)}\left(a_{pl}^{(j)} \right)^{*} - a_{pl}^{(i)}\left(a_{ql}^{(j)} \right)^{*} \right]\right) = 0 ~,~ \forall q=1,...,n_t, p=q+1,...,n_t  
\end{equation}
 whenever $i \in \Gamma_m$ and $j \in \Gamma_n$ and $m \neq n$, then the code is fast decodable.
\end{lemma}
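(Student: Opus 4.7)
The plan is to mirror the induction argument used in Lemma \ref{groupdlema} (multi-group decodable codes), since fast decodability is essentially a local version of group decodability confined to the first $L$ columns of $\mathbf{H}_{eq}$. The conclusion to prove is that $R_{ij}=0$ whenever $i \in \Gamma_m$ and $j \in \Gamma_n$ with $m \neq n$ and both indices lying in $\{1,\ldots,L\}$; because $\mathbf{R}$ is upper triangular we may assume $i < j$, hence it is enough to treat the case $m < n$. Entries involving columns $L+1,\ldots,2\kappa$ remain unconstrained and correspond exactly to the blocks $\mathbf{B}_1$ and $\mathbf{B}_2$ required by the fast-decodable form of $\mathbf{R}$.

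First I would apply Theorem \ref{ourth}: the two trace-form conditions (\ref{conditions2}) and (\ref{conditions2bis}) hold for every cross-group pair $(i,j)$, so $\langle \mathbf{h}_i^{eq}, \mathbf{h}_j^{eq}\rangle = 0$ whenever $i \in \Gamma_m$, $j \in \Gamma_n$ with $m \neq n$. Next I would prove by induction on $i$ that $R_{ij} = \langle \mathbf{q}_i, \mathbf{h}_j^{eq}\rangle = 0$ for every $j > i$ belonging to a different group than $i$. The Gram--Schmidt identity recalled in the preliminaries yields
\begin{equation}
\langle \mathbf{q}_i, \mathbf{h}_j^{eq}\rangle = \frac{1}{\|\mathbf{r}_i\|}\left[ \langle \mathbf{h}_i^{eq}, \mathbf{h}_j^{eq}\rangle - \sum_{k=1}^{i-1} \langle \mathbf{q}_k, \mathbf{h}_i^{eq}\rangle \langle \mathbf{q}_k, \mathbf{h}_j^{eq}\rangle \right], \notag
\end{equation}
and I would argue that each summand vanishes: the leading inner product is zero by the application of Theorem \ref{ourth} above; for each $k < i$, either $k$ lies in a group $\Gamma_p$ with $p < m$, so that $\langle \mathbf{q}_k, \mathbf{h}_i^{eq}\rangle = 0$ by the inductive hypothesis applied to the pair $(k,i)$, or $k \in \Gamma_m$, so that $\langle \mathbf{q}_k, \mathbf{h}_j^{eq}\rangle = 0$ by the inductive hypothesis applied to $(k,j)$ (valid since $k < i$ and $m < n$). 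In either case the corresponding product is zero, and the induction step closes.

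No serious obstacle is anticipated: once Theorem \ref{ourth} delivers cross-group orthogonality of the columns of $\mathbf{H}_{eq}$, the induction is a near-verbatim transcription of the proof of Lemma \ref{groupdlema}, the only modification being that the partition now covers $\{1,\ldots,L\}$ instead of $\{1,\ldots,2\kappa\}$. This confines the resulting zero pattern to the upper-left $L\times L$ block of $\mathbf{R}$, producing the block-diagonal $\Delta$ together with arbitrary $\mathbf{B}_1$ and upper triangular $\mathbf{B}_2$, which is precisely the fast-decodable structure. The one point requiring care is simply to keep the ordering of the partition consistent with the column order of $\mathbf{G}$ so that the Gram--Schmidt process processes indices group-by-group; this is guaranteed by the remark following Eq.~(\ref{mlpb}) on the impact of the permutation $\mathbf{P}_s$.
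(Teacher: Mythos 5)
Your proposal is correct and follows essentially the same route as the paper, which simply states that the proof of Lemma \ref{fastd} is obtained from that of Lemma \ref{groupdlema} by replacing $2\kappa$ with $L$ in the ordered partition; you have written out exactly that adaptation, invoking Theorem \ref{ourth} for cross-group column orthogonality and then running the Gram--Schmidt induction restricted to the first $L$ indices. Your case split on whether $k$ lies in $\Gamma_m$ or in an earlier group is in fact slightly more careful than the paper's own wording, but it is the same argument.
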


\begin{proof}
The proof follows from the proof of Lemma (\ref{groupdlema}) by replacing $2\kappa$ with $L$ in the ordered partition.
\end{proof}

\paragraph*{\textbf{Example 6}: we aim in this example to illustrate the above lemma through the example of the Silver code described in example 2. The $8$ weight matrices are given by: 
\vspace{-0.25cm}
\small{
 \begin{align}
 & \mathbf{A}_1= \left[ \begin{array}{cc}
 1 & 0 \\ 0 & 1
  \end{array}\right] ~, \mathbf{A}_2= \left[ \begin{array}{cc}
  i & 0 \\ 0 & -i
   \end{array}\right] ~,~ \mathbf{A}_3= \left[ \begin{array}{cc}
   0 & -1 \\ 1 & 0
    \end{array}\right] ~,~ \mathbf{A}_4= \left[ \begin{array}{cc}
   0 & i \\ i & 0
     \end{array}\right] \notag \\
      & \mathbf{A}_5= \left[ \begin{array}{cc}
        U_{11} & -U_{21}^{*} \\ -U_{21} & -U_{11}^{*}
         \end{array}\right] ~, \mathbf{A}_6= \left[ \begin{array}{cc}
         iU_{11} & iU_{21}^{*} \\ -iU_{21} & iU_{11}^{*}
          \end{array}\right] ~,~ \mathbf{A}_7= \left[ \begin{array}{cc}
          U_{12} & -U_{22}^{*} \\ -U_{22} & -U_{12}^{*}
           \end{array}\right] ~,~ \mathbf{A}_8= \left[ \begin{array}{cc}
          iU_{12} & iU_{22}^{*} \\ -iU_{22} & iU_{12}^{*}
            \end{array}\right] \notag
    \end{align} }
    \normalsize
    %\vspace{-0.25cm}
    with $U_{11}=\frac{1}{\sqrt{7}}\left(1+i\right), U_{12}=\frac{1}{\sqrt{7}}\left(-1+2i\right), U_{21}=\frac{1}{\sqrt{7}}\left(1+2i\right)=-U_{12}^{*}, U_{22}=\frac{1}{\sqrt{7}}\left(1-i\right)=U_{11}^{*}$. Now, in order to determine the zero structure of the corresponding $\mathbf{R}$ matrix, we check the design conditions according to lemma (\ref{fastd}). We distinguish two parts. For the first part, the conditions in equation (\ref{conditions2}) are met as follows: 
    \vspace{-0.25cm}
   \begin{align}
   \left\lbrace \mathbf{A}_1,\mathbf{A}_2 \right\rbrace ~ & l=1: \mathrm{Tr}\left(1(i)^{*}\right)=0,\mathrm{Tr}\left(0(0)^{*}\right)=0,\mathrm{Tr}\left(1(0)^{*}+0(i)^{*}\right)=0 \notag \\
   & l=2 : \mathrm{Tr}\left(0(0)^{*}\right)=0,\mathrm{Tr}\left(1(-i)^{*}\right)=0, \mathrm{Tr}\left(0(-i)^{*}+1(0)^{*}\right)=0  \notag
   \end{align}
    \begin{align}
    \left\lbrace \mathbf{A}_1,\mathbf{A}_4 \right\rbrace ~ & l=1: \mathrm{Tr}\left(1(0)^{*}\right)=0,\mathrm{Tr}\left(0(i)^{*}\right)=0, \mathrm{Tr}\left(1(i)^{*}+0(0)^{*}\right)=0 \notag \\
    & l=2 : \mathrm{Tr}\left(0(i)^{*}\right)=0,\mathrm{Tr}\left(1(0)^{*}\right)=0,\mathrm{Tr}\left(0(0)^{*}+1(i)^{*}\right)=0  \notag
    \end{align}
    \begin{align}
     \left\lbrace \mathbf{A}_2,\mathbf{A}_3 \right\rbrace ~ & l=1: \mathrm{Tr}\left(i(0)^{*}\right)=0,\mathrm{Tr}\left(0(1)^{*}\right)=0, \mathrm{Tr}\left(i(1)^{*}+0(0)^{*}\right)=0 \notag \\
     & l=2 : \mathrm{Tr}\left(0(-1)^{*}\right)=0,\mathrm{Tr}\left((-i)(0)^{*}\right)=0,\mathrm{Tr}\left(0(0)^{*}+(-i)(-1)^{*}\right)=0  \notag
     \end{align}
    \begin{align}
     \left\lbrace \mathbf{A}_3,\mathbf{A}_4 \right\rbrace ~ & l=1: \mathrm{Tr}\left(0(0)^{*}\right)=0,\mathrm{Tr}\left(1(i)^{*}\right)=0, \mathrm{Tr}\left(0(i)^{*}+1(0)^{*}\right)=0 \notag \\
     & l=2 : \mathrm{Tr}\left((-1)(i)^{*}\right)=0,\mathrm{Tr}\left(0(0)^{*}\right)=0,\mathrm{Tr}\left((-1)(0)^{*}+0(i)^{*}\right)=0  \notag
     \end{align}  
     \begin{align}
        &  \left\lbrace \mathbf{A}_5,\mathbf{A}_6 \right\rbrace ~ l=1: \mathrm{Tr}\left(U_{11}(iU_{11})^{*}\right)=0,\mathrm{Tr}\left((-U_{21})(-iU_{21})^{*}\right)=0, \mathrm{Tr}\left(U_{11}(-iU_{21})^{*}+(-U_{21})(iU_{11})^{*}\right)=0 \notag \\
          & l=2 : \mathrm{Tr}\left((-U_{21})^{*}(-iU_{21})^{*}\right)=0,\mathrm{Tr}\left((-U_{11})^{*}(-iU_{11})^{*}\right)=0,\mathrm{Tr}\left((-U_{21})^{*}(-iU_{11})^{*}+(-U_{11}^{*})(-iU_{21})^{*}\right)=0  \notag
          \end{align} 
     \begin{align}
          &  \left\lbrace \mathbf{A}_7,\mathbf{A}_8 \right\rbrace ~ l=1: \mathrm{Tr}\left(U_{12}(iU_{12})^{*}\right)=0,\mathrm{Tr}\left((-U_{22})(-iU_{22})^{*}\right)=0, \mathrm{Tr}\left(U_{12}(-iU_{22})^{*}+(-U_{22})(iU_{12})^{*}\right)=0 \notag \\
            & l=2 : \mathrm{Tr}\left((-U_{22})^{*}(-iU_{22})^{*}\right)=0,\mathrm{Tr}\left((-U_{12})^{*}(-iU_{12})\right)=0,\mathrm{Tr}\left((-U_{22})^{*}(-iU_{12})+(-U_{12}^{*})(-iU_{22})\right)=0  \notag
            \end{align}         
  Then, for the second part, conditions in equation (\ref{conditions2bis}) are satisfied according to:
       \begin{align}
       \left\lbrace \mathbf{A}_1,\mathbf{A}_3 \right\rbrace ~ & l=1: \mathrm{Tr}\left(i\left[ 1(1)^{*}-0(0)^{*} \right] \right)=0 ~;~ l=2 : \mathrm{Tr}\left(i\left[ 0(0)^{*}-1(-1)^{*} \right] \right)=0 \notag
       \end{align}
         \begin{align}
         \left\lbrace \mathbf{A}_2,\mathbf{A}_4 \right\rbrace ~& l=1: \mathrm{Tr}\left(i\left[ i(i)^{*}-0(0)^{*} \right] \right)=0 ~;~ l=2: \mathrm{Tr}\left(i\left[ 0(0)^{*}-(-i)(i)^{*} \right] \right)=0 \notag
         \end{align}    
         Notice that for each couple of the above analyzed matrices, the individual trace forms are equal to zero for each of the columns $l=1$ and $l=2$, thus the summation of the trace forms over the two columns is equal to zero. For the remaining matrices we have the following:
         \footnotesize{
 \begin{align}
  & \left\lbrace \mathbf{A}_5,\mathbf{A}_7 \right\rbrace ~ l=1,2 :~ \notag \\  &\mathrm{Tr}\left(i \left[-U_{11}U_{22}^{*}+U_{21}U_{12}^{*} \right] + i \left[ U_{21}^{*}U_{12}-U_{11}^{*}U_{22} \right]\right) = \mathrm{Tr}\left(-i\left[U_{11}U_{22}^{*}+U_{11}^{*}U_{22} \right] + i\left[U_{21}U_{12}^{*}+U_{21}^{*}U_{12} \right] \right) =0 \notag
 \end{align}  
  \begin{align}
  &\left\lbrace \mathbf{A}_5,\mathbf{A}_8 \right\rbrace ~ l=1,2 :~ \notag \\ & \mathrm{Tr}\left(i \left[U_{11}(iU_{22}^{*})+U_{21}(-iU_{12}^{*}) \right] + i \left[ (-U_{21}^{*})(-iU_{12})-(-U_{11}^{*})(-iU_{22}) \right]\right) = \mathrm{Tr}\left(-\left[U_{11}U_{22}^{*}-U_{11}^{*}U_{22} \right] - \left[U_{21}^{*}U_{12}-U_{21}U_{12}^{*} \right] \right) =0 \notag
  \end{align}
   \begin{align}
    &\left\lbrace \mathbf{A}_6,\mathbf{A}_7 \right\rbrace l=1,2 :~ \notag \\ & \mathrm{Tr}\left(i \left[iU_{11}(-U_{22}^{*})-(-iU_{21})(U_{12}^{*}) \right] + i \left[ (iU_{21}^{*})(-U_{12})-(iU_{11}^{*})(-U_{22}) \right]\right) = \mathrm{Tr}\left(-\left[U_{11}U_{22}^{*}-U_{11}^{*}U_{22} \right] - \left[U_{21}U_{12}^{*}-U_{21}^{*}U_{12} \right] \right) =0 \notag
    \end{align}
      \begin{align}
        &\left\lbrace \mathbf{A}_6,\mathbf{A}_8 \right\rbrace l=1,2 :~ \notag \\ & \mathrm{Tr}\left(i \left[iU_{11}(iU_{22}^{*})+(iU_{21})(-iU_{12}^{*}) \right] + i \left[ (iU_{21}^{*})(-iU_{12})-(iU_{11}^{*})(-iU_{22}) \right]\right) = \mathrm{Tr}\left(-i\left[U_{11}U_{22}^{*}+U_{11}^{*}U_{22} \right] + i\left[U_{21}U_{12}^{*}+U_{21}^{*}U_{12} \right] \right) =0 \notag
        \end{align}}
        \normalsize
The second group of weight matrices satisfy the conditions of equation (\ref{conditions2bis}). We deduce the zero structure of the $\mathbf{R}$ matrix corresponding to the Silver code as depicted in figure \ref{sc}.
 }
\begin{remark}
The Silver code is an example of fast decodable codes and particularly Block orthogonal code. In order to illustrate the suboptimality of the HRQF-based method, we check the conditions in Lemma (\ref{var1}) for the matrices $\lbrace \mathbf{A}_5,\mathbf{A}_7 \rbrace$. We notice that the condition of equation (\ref{cond}) for $p=1, q=2$ is not satisfied since $\mathrm{Tr}\left(U_{11}(-U_{22})^{*}+(U_{21}^{*}U_{12})\right) \neq 0$. Using the HRQF-based method in this case, the corresponding entry in the HRQF matrix is equal to zero while the corresponding value in the matrix $\mathbf{R}$ is different from zero. The zero structure of the $\mathbf{R}$ matrix is not then fully determined by the HRQF-based design criteria. 
\end{remark}
\begin{lemma}
Let an STBC with $\kappa$ independent complex symbols, $2\kappa$ weight matrices and HRQF matrix $\mathbf{U}$. If there exists a partition of $\left\lbrace 1,2,...,L \right\rbrace$ where $L \leq 2\kappa$ into $g$ non empty subsets $\Gamma_1, \Gamma_2,...,\Gamma_g$ with cardinalities $\kappa_1,...,\kappa_g$ such that 
\begin{equation}\label{conditions3}
\sum_{l=1}^{T}\mathrm{Tr}\left( a_{ql}^{(i)}\left(a_{pl}^{(j)} \right)^{*} + a_{pl}^{(i)}\left(a_{ql}^{(j)} \right)^{*} \right) = 0 ~,~ \forall q=1,...,n_t, p=q,q+1,...,n_t  
\end{equation}
\begin{equation}\label{conditions3bis}
\sum_{l=1}^{T}\mathrm{Tr}\left(i\left[ a_{ql}^{(i)}\left(a_{pl}^{(j)} \right)^{*} - a_{pl}^{(i)}\left(a_{ql}^{(j)} \right)^{*} \right]\right) = 0 ~,~ \forall q=1,...,n_t, p=q+1,...,n_t  
\end{equation}
whenever $i \in \Gamma_m$ and $j \in \Gamma_n$ and $m \neq n$, and if any group $\Gamma_i$ admits fast decodability, then the code is fast group decodable.
\end{lemma}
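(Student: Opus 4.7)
The plan is to obtain the fast-group decodable shape of $\mathbf{R}$ by gluing together the arguments already used for Lemma \ref{groupdlema} and Lemma \ref{fastd}. First I would invoke Theorem \ref{ourth} to convert conditions (\ref{conditions3}) and (\ref{conditions3bis}) into the statement $\langle \mathbf{h}^{eq}_i,\mathbf{h}^{eq}_j\rangle = 0$ whenever $i\in\Gamma_m$, $j\in\Gamma_n$, $m\neq n$, which is exactly the hypothesis needed for the inductive Gram--Schmidt argument in the proof of Lemma \ref{groupdlema}.

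Second, I would reuse that induction verbatim to conclude $R_{ij}=0$ whenever $i\in\Gamma_m$, $j\in\Gamma_n$, $m\neq n$, for all $i,j\le L$. This isolates a block-diagonal shape on the leading $L\times L$ submatrix of $\mathbf{R}$: writing $L_u=\sum_{v=1}^{u}\kappa_v$, the block indexed by rows and columns in $\{L_{u-1}+1,\dots,L_u\}$ is some upper-triangular $\mathbf{R}_u$, and all off-diagonal inter-group blocks vanish. The remaining columns indexed by $L+1,\dots,2\kappa$ (if any) occupy the trailing part of $\mathbf{R}$ with no constraint, as is customary in the fast-group setting.

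Third, I would handle the fast-decodability inside each group $\Gamma_i$ that admits it. Because of Step 2, the cross-group Gram--Schmidt coefficients $\langle \mathbf{q}_k,\mathbf{h}^{eq}_j\rangle$ with $k\in\Gamma_m$, $j\in\Gamma_n$, $m\neq n$ all vanish, so for any $j\in\Gamma_n$ the recursion
\begin{equation}
\mathbf{r}_j = \mathbf{h}^{eq}_j - \sum_{k<j}\langle \mathbf{q}_k,\mathbf{h}^{eq}_j\rangle\,\mathbf{q}_k \notag
\end{equation}
collapses to a sum over $k\in\Gamma_n$, $k<j$. Thus the QR decomposition restricted to the columns of $\mathbf{H}_{eq}$ indexed by $\Gamma_n$ reduces to the QR decomposition of the corresponding isolated submatrix. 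The fast-decodability orthogonality relations assumed within $\Gamma_n$ then feed, via Theorem \ref{ourth} and the inductive argument of Lemma \ref{fastd}, into the conclusion that $\mathbf{R}_n$ has the form $\bigl[\begin{smallmatrix}\Delta_n & \mathbf{B}_{n_1}\\ \mathbf{0} & \mathbf{B}_{n_2}\end{smallmatrix}\bigr]$ with $\Delta_n$ block diagonal and upper triangular.

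The main obstacle is precisely the claim in Step 3 that Gram--Schmidt inside a group can be carried out as though the earlier groups were not present; a priori, contributions from $\mathbf{q}_k$ with $k$ in an earlier group could couple into later $\mathbf{r}_j$ and destroy the within-group structure. The resolution is the vanishing of all such cross-group inner products established in Step 2, which decouples the recursion group-by-group and lets the proof of Lemma \ref{fastd} be transported intact to the block $\mathbf{R}_n$. Combining the block-diagonal structure from Step 2 with the internal fast-decodable structure from Step 3 yields exactly the fast-group decodable form of $\mathbf{R}$, completing the proof.
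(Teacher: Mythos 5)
Your proposal is correct and follows essentially the same route as the paper, whose proof is simply the one-line remark that the result follows by combining the proofs of Lemmas \ref{groupdlema} and \ref{fastd}. Your write-up fills in the one genuinely nontrivial gluing point that the paper leaves implicit, namely that the vanishing cross-group inner products make the Gram--Schmidt recursion decouple group-by-group so that the fast-decodability argument can be applied inside each block.
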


\begin{proof}
The proof follows from the proofs of Lemmas (\ref{groupdlema}) and (\ref{fastd}).
\end{proof}

The last class of codes studied in this work is the Block orthogonal family. We provide in the following lemma sufficient design criteria for STBCs to be Block Orthogonal of parameters $(\Gamma,k,\gamma)$.

\begin{lemma}\label{ourbo}
Let the $\mathbf{R}$ matrix of an STBC with weight matrices $\left\lbrace \mathbf{A}_1,...,\mathbf{A}_L \right\rbrace$ and $\left\lbrace \mathbf{B}_1,...,\mathbf{B}_l \right\rbrace$ be $\mathbf{R}=\left[\begin{array}{cc}
\mathbf{R}_1 & \mathbf{E} \\ \mathbf{0} & \mathbf{R}_2
\end{array} \right]$, where $\mathbf{R}_1$ is an $L \times L$ upper triangular block-orthogonal matrix with parameters $\left(\Gamma-1,k,\gamma \right)$, $\mathbf{E}$ is an $L \times l$ matrix and $\mathbf{R}_2$ is an $l \times l$ upper triangular matrix. The STBC will be block orthogonal with parameters $(\Gamma,k,\gamma)$ if the following conditions are satisfied:
\begin{itemize}
\item If there exists an ordered partition of the set of matrices $\left\lbrace \mathbf{B}_1,...,\mathbf{B}_l \right\rbrace$ into $k$ non empty subsets $S_1,...,S_k$ each of cardinality $\gamma$ such that:
\begin{equation}\label{conditions11a}
\sum_{l=1}^{T}\mathrm{Tr}\left( a_{ql}^{(i)}\left(a_{pl}^{(j)} \right)^{*} + a_{pl}^{(i)}\left(a_{ql}^{(j)} \right)^{*} \right) = 0 ~,~ \forall q=1,...,n_t, p=q,q+1,...,n_t  
\end{equation}
\begin{equation}\label{conditions22a}
\sum_{l=1}^{T}\mathrm{Tr}\left(i\left[ a_{ql}^{(i)}\left(a_{pl}^{(j)} \right)^{*} - a_{pl}^{(i)}\left(a_{ql}^{(j)} \right)^{*} \right]\right) = 0 ~,~ \forall q=1,...,n_t, p=q+1,...,n_t  
\end{equation}
 whenever $i \in S_m$ and $j \in S_n$ and $m \neq n$.
\item The matrices $\left\lbrace \mathbf{A}_1,...,\mathbf{A}_L \right\rbrace$ when used as weight matrices for an STBC yield an $\mathbf{R}$ having a block orthogonal structure with parameters $(\Gamma-1,k,\gamma)$. When $\Gamma=2$, then $L=l$ and the matrices $\left\lbrace \mathbf{A}_1,...,\mathbf{A}_L \right\rbrace$ are $k-$group decodable with variables $\gamma$ in each group.
\item The set of matrices  $\left\lbrace \mathbf{A}_1,...,\mathbf{A}_L, \mathbf{B}_1,...,\mathbf{B}_l \right\rbrace$ are such that the matrix $\mathbf{R}$ obtained is of full rank.
\item The matrix $\mathbf{E}^{t}\mathbf{E}$ is a block diagonal matrix with $k$ blocks of size $\gamma \times \gamma$.
\end{itemize}
\end{lemma}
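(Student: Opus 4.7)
The plan is to reduce Lemma \ref{ourbo} to Definition \ref{bo} by showing that the trace conditions (\ref{conditions11a})--(\ref{conditions22a}), together with the ordered partition of $\{\mathbf{B}_1,\ldots,\mathbf{B}_l\}$ into $k$ subsets of cardinality $\gamma$, imply that these matrices act as a $k$-group decodable family with $\gamma$ variables in each group inside the $\mathbf{R}_2$ block. Conditions 2, 3, and 4 of the lemma are stated identically to the corresponding items of Definition \ref{bo}, so once the first condition is translated, the conclusion follows.

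First I would apply Lemma \ref{groupdlema} to $\{\mathbf{B}_1,\ldots,\mathbf{B}_l\}$: the trace hypotheses are exactly those required by that lemma for the ordered partition $S_1,\ldots,S_k$, and it delivers the $k$-group decodable structure when the $\mathbf{B}_i$'s are viewed as the weight matrices of an STBC in isolation. The subtlety is that here the $\mathbf{B}_i$'s do not sit at the top of $\mathbf{R}$; they are processed after the $\mathbf{A}_i$'s in the Gram-Schmidt order, so $\mathbf{R}_2$ is the QR factor obtained after orthogonalising the $\mathbf{B}$-columns of $\mathbf{H}_{eq}$ against the span of the $\mathbf{A}$-columns. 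To recover the block diagonal pattern on $\mathbf{R}_2$, I would re-run the induction from the proof of Lemma \ref{groupdlema}, but on the indices $L+1,\ldots,L+l$. For $L+i$ and $L+j$ lying in different subsets $S_m\neq S_n$, the Gram-Schmidt expression for $R_{L+i,L+j}$ splits into three pieces: the pure inner product $\langle\mathbf{h}^{eq}_{L+i},\mathbf{h}^{eq}_{L+j}\rangle$, which vanishes by Theorem \ref{ourth}; a cross term $\sum_{k=1}^{L}\langle\mathbf{q}_k,\mathbf{h}^{eq}_{L+i}\rangle\langle\mathbf{q}_k,\mathbf{h}^{eq}_{L+j}\rangle$, which is exactly the $(i,j)$-entry of $\mathbf{E}^{t}\mathbf{E}$ and is therefore zero by condition 4; and an inner sum over $k\in\{L+1,\ldots,L+i-1\}$, which vanishes by the induction hypothesis on the $\mathbf{B}$-indexing.

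Next I would handle the outer block structure by an induction on $\Gamma$. Condition 2 furnishes the inductive step: assuming the lemma for parameters $(\Gamma-1,k,\gamma)$, the set $\{\mathbf{A}_1,\ldots,\mathbf{A}_L\}$ yields a block orthogonal $\mathbf{R}_1$ of the required form, and the base case $\Gamma=2$ reduces to a direct application of Lemma \ref{groupdlema} to $\{\mathbf{A}_1,\ldots,\mathbf{A}_L\}$ viewed as a $k$-group decodable family with $\gamma$ variables per group. Condition 3 enforces full rank of the composite $\mathbf{R}$. Stitching the two diagonal blocks with the off-diagonal bridge $\mathbf{E}$ then produces the $(\Gamma,k,\gamma)$ block orthogonal pattern demanded by Definition \ref{bo}.

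The main obstacle I expect is the cross-cancellation step in the analysis of $\mathbf{R}_2$: the trace conditions are intrinsic to the matrices $\mathbf{B}_i$ and carry no information about the $\mathbf{A}_i$'s, so without an extra hypothesis there is no reason for the Gram-Schmidt contribution of $\mathbf{q}_1,\ldots,\mathbf{q}_L$ to $R_{L+i,L+j}$ to vanish. Establishing that condition 4 on $\mathbf{E}^{t}\mathbf{E}$ is precisely what closes this gap, and checking that the ordering inherited from $S_1,\ldots,S_k$ is compatible with the induction on the $\mathbf{B}$-indices, is the delicate part of the argument; everything else is bookkeeping around Definition \ref{bo}.
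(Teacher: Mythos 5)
Your proposal is correct and follows essentially the same route as the paper: apply Lemma \ref{groupdlema} to $\left\lbrace \mathbf{B}_1,\ldots,\mathbf{B}_l \right\rbrace$ under the trace conditions (\ref{conditions11a})--(\ref{conditions22a}) to obtain $k$-group decodability with $\gamma$ variables per group, and then observe that the remaining hypotheses coincide with those of Definition \ref{bo}. In fact your write-up is more thorough than the paper's two-sentence proof, since you make explicit the Gram--Schmidt cross term $\sum_{k=1}^{L}\langle\mathbf{q}_k,\mathbf{h}^{eq}_{L+i}\rangle\langle\mathbf{q}_k,\mathbf{h}^{eq}_{L+j}\rangle=(\mathbf{E}^{t}\mathbf{E})_{ij}$ and the role of the fourth condition in forcing the block structure of $\mathbf{R}_2$, details the paper delegates to the cited prior work on block-orthogonal codes.
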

\begin{proof}
We know from lemma (\ref{groupdlema}) that if equations (\ref{conditions11a}) and (\ref{conditions22a}) are satisfied, then the matrices $\left\lbrace \mathbf{B}_1,...,\mathbf{B}_l \right\rbrace$ are $k-$group decodable. The proof follows given the remaining design conditions are met for codes to be block-orthogonal.
\end{proof}

\paragraph*{\textbf{Example 7}: For the case of the Golden code, there are 8 LD matrices as follows:
\begin{align}
 &\mathbf{A}_1 =\mathrm{diag}\left(\nu_1,\sigma(\nu_1) \right) ,~ \mathbf{A}_2=i~\mathrm{diag}\left(\nu_1,\sigma(\nu_1) \right),~ \mathbf{A}_3= \mathrm{diag}\left(\nu_2,\sigma(\nu_2) \right),~\mathbf{A}_4=i~\mathrm{diag}\left(\nu_2,\sigma(\nu_2) \right) \notag \\
 &\mathbf{A}_5 =\mathrm{diag}\left(\nu_1,\sigma(\nu_1) \right)e ,~ \mathbf{A}_6=i~\mathrm{diag}\left(\nu_1,\sigma(\nu_1) \right)e,~ \mathbf{A}_7= \mathrm{diag}\left(\nu_2,\sigma(\nu_2) \right)e,~\mathbf{A}_8=i~\mathrm{diag}\left(\nu_2,\sigma(\nu_2) \right)e \notag
\end{align}
where $\gamma=i$ and $e=\left[ \begin{array}{cc}
0 & 1 \\
\gamma & 0
\end{array} \right]$. Now, using our criteria in theorem (\ref{ourth}), we can exactly determine the zero entries in the matrix $\mathbf{R}$. Using the conditions in equation (\ref{conditions3bis}), we have the following:
\begin{align}
\lbrace \mathbf{A}_1,\mathbf{A}_2 \rbrace:~ & l=1: \mathrm{Tr}\left(i\left[ \nu_1(0)^{*}-0(i\nu_1)^{*}\right]\right)=0 \notag \\
& l=2: \mathrm{Tr}\left(i\left[ 0(i\sigma(\nu_1))^{*}-\sigma(\nu_1)(0)^{*}\right]\right)=0  \notag
\end{align} 
\begin{align}
\lbrace \mathbf{A}_1,\mathbf{A}_4 \rbrace:~ & l=1: \mathrm{Tr}\left(i\left[ \nu_1(0)^{*}-0(i\nu_2)^{*}\right]\right)=0  \notag \\
& l=2: \mathrm{Tr}\left(i\left[ 0(i\sigma(\nu_2))^{*}-\sigma(\nu_1)(0)^{*}\right]\right)=0 \notag
\end{align}
\begin{align}
\lbrace \mathbf{A}_2,\mathbf{A}_3 \rbrace:~ & l=1: \mathrm{Tr}\left(i\left[ i\nu_1(0)^{*}-0(\nu_2)^{*}\right]\right)=0 \notag \\
& l=2: \mathrm{Tr}\left(i\left[ 0(\sigma(\nu_2))^{*}-i\sigma(\nu_1)(0)^{*}\right]\right)=0 \notag
\end{align} 
\begin{align}
\lbrace \mathbf{A}_3,\mathbf{A}_4 \rbrace:~  & l=1: \mathrm{Tr}\left(i\left[ \nu_2(0)^{*}-0(i\nu_2)^{*}\right]\right)=0 \notag \\
& l=2: \mathrm{Tr}\left(i\left[ 0(i\sigma(\nu_2))^{*}-\sigma(\nu_2)(i\sigma(\nu_2))^{*}\right]\right)=0 \notag
\end{align}
\begin{align}
\lbrace \mathbf{A}_5,\mathbf{A}_6 \rbrace:~ & l=1: \mathrm{Tr}\left(i\left[ 0(i\gamma\sigma(\nu_1))^{*}-\gamma\sigma(\nu_1)(0)^{*}\right]\right)=0 \notag \\
& l=2: \mathrm{Tr}\left(i\left[ \nu_1(0)^{*}-0(i\nu_1)^{*}\right]\right)=0 \notag
\end{align} 
\begin{align}
\lbrace \mathbf{A}_5,\mathbf{A}_8 \rbrace:~ & l=1: \mathrm{Tr}\left(i\left[ 0(i\gamma\sigma(\nu_2))^{*}-\gamma\sigma(\nu_1)(0)^{*}\right]\right)=0 \notag \\
& l=2: \mathrm{Tr}\left(i\left[ \nu_1(0)^{*}-0(i\nu_2)^{*}\right]\right)=0 \notag
\end{align}
\begin{align}
\lbrace \mathbf{A}_6,\mathbf{A}_7 \rbrace:~ & l=1: \mathrm{Tr}\left(i\left[ 0(\gamma\sigma(\nu_2))^{*}-i\gamma\sigma(\nu_1)(0)^{*}\right]\right)=0 \notag \\
& l=2: \mathrm{Tr}\left(i\left[ i\nu_1(0)^{*}-0(\nu_2)^{*}\right]\right)=0 \notag \notag
\end{align} 
\begin{align}
\lbrace \mathbf{A}_7,\mathbf{A}_8 \rbrace:~ & l=1: \mathrm{Tr}\left(i\left[ 0(i\gamma\sigma(\nu_2))^{*}-\gamma\sigma(\nu_2)(0)^{*}\right]\right)=0 \notag \\
& l=2: \mathrm{Tr}\left(i\left[ \nu_2(0)^{*}-0(i\nu_2)^{*}\right]\right)=0  \notag
\end{align}
Again for the weight matrices of the Golden code example, it is enough to check that the individual trace forms are equal to zero \cite{Mejri15} to have the sum over all columns equals $0$ to meet the conditions of equation (\ref{conditions3bis}). Given the satisfied properties, we obtain the form of the matrix $\mathbf{R}$ depicted in figure \ref{goldencmat}.
}

\section{Conclusion}
This work is dedicated to the design of STBCs in MIMO systems that admit low-complexity ML decoding using sequential decoding, for example thourgh the Sphere Decoder. We proposed novel sufficient design criteria for weight matrices defining the code for an arbitrary number of antenna and any coding rate. Our conditions explain the suboptimality of the existing approaches and show that the structure of the $\mathbf{R}$ matrix of the QR decomposition of the equivalent channel matrix depends only on the weight matrices and their ordering and not on the channel matrix or the number of antennas. Using the derived design criteria, we aim in future works to construct classes of STBCs offering low-complexity ML decoding.

\appendices
\section{Appendix: Properties of matrix $\mathbf{M}$}
\label{append1}
We show in this appendix the properties of the matrix $\mathbf{M}=\check{\mathbf{H}}^{t}\check{\mathbf{H}}$. Given the complex-to-real transformation $\check{(.)}$, column vectors of the matrix $\check{\mathbf{H}}$ are given for all $i=1,...,n_t$ by the following:
\begin{align}
&\mathbf{h}_{2i-1} = \left[ \tilde{h}_{1,2i-1},\tilde{h}_{2,2i-1},...,\tilde{h}_{n_r,2i-1} \right]^{t}= \left[\Re{(h_{1,2i-1})},\Im{(h_{1,2i-1})},..., \Re{(h_{n_r,2i-1})},\Im{(h_{n_r,2i-1})}\right]^{t}\\
&\mathbf{h}_{2i} = \left[ \bar{\bar{h}}_{1,2i-1},\bar{\bar{h}}_{2,2i-1},...,\bar{\bar{h}}_{n_r,2i-1} \right]^{t} = \left[-\Im{(h_{1,2i-1})},\Re{(h_{1,2i-1})},...,-\Im{(h_{n_r,2i-1})},\Re{(h_{n_r,2i-1})}\right]^{t}
\end{align}
Consequently we obtain:
\begin{align}
& M_{2i,2j} = <\mathbf{h}_{2i},\mathbf{h}_{2j}> = \sum_{n=1}^{n_r}\bar{\bar{h}}_{n,2i-1} \bar{\bar{h}}_{1,2j-1} = \sum_{n=1}^{n_r}\tilde{h}_{n,2i-1}\tilde{h}_{n,2j-1} = \frac{1}{2}\sum_{n=1}^{n_r}\mathrm{Tr}\left(h_{n,2i-1}h^{*}_{n,2j-1} \right), i,j=1,...,n_t \\
& M_{2i-1,2j-1} = <\mathbf{h}_{2i-1},\mathbf{h}_{2j-1}> = \sum_{n=1}^{n_r}\tilde{h}_{n,2i-1} \tilde{h}_{1,2j-1} = \frac{1}{2}\sum_{n=1}^{n_r}\mathrm{Tr}\left(h_{n,2i-1}h^{*}_{n,2j-1} \right), i,j=1,...,n_t \\
& M_{2i-1,2j} = <\mathbf{h}_{2i-1},\mathbf{h}_{2j}> = \sum_{n=1}^{n_r} \tilde{h}_{n,2i-1}\bar{\bar{h}}_{n,2j-1} = \frac{1}{2}\sum_{n=1}^{n_r}\mathrm{Tr}\left(ih_{n,2i-1}h^{*}_{n,2j-1}\right), i,j=1,...,n_t, j>i \\
& M_{2i,2j-1} = <\mathbf{h}_{2i},\mathbf{h}_{2j-1}> = \sum_{n=1}^{n_r}\bar{\bar{h}}_{n,2i-1}\tilde{h}_{n,2j-1} = \frac{1}{2}\sum_{n=1}^{n_r}\mathrm{Tr}\left(i^{*}h_{n,2i-1}h^{*}_{n,2j-1} \right),i,j=1,...,n_t, j>i
\end{align}
It follows that:
\begin{align}\label{prop}
& M_{2i-1,2i-1} = M_{2i,2i} ~,~ \forall i=1,...,n_t  \\
& M_{2i-1,2i} = 0 ~,~ \forall i=1,...,n_t  \\
& M_{2i-1,2j-1} = M_{2i,2j} ~,~ \forall 1 \leq i,j \leq n_t, j>i  \\
& M_{2i,2j-1} = - M_{2i-1,2j} ~,~  \forall 1 \leq i,j \leq n_t, j>i  
\end{align}

%\nocite{*}   
%\small{
%\bibliographystyle{plain}  
%\bibliography{midtermbib}}
\bibliographystyle{plain}  
\bibliography{JournalWCOMv3.bbl}
\end{document}